\theoremstyle{plain}
\newtheorem{theorem}{Theorem}[section]
\newtheorem{lemma}[theorem]{Lemma}
\newtheorem{example}{Example}
\newtheorem{remark}{Remark}
\newtheorem{corollary}[theorem]{Corollary}
\newtheorem{proposition}[theorem]{Proposition}
\newcommand{\nck}[2]{{#1 \choose #2}}
\newcommand{\id}[1]{\langle#1\rangle}
\newcommand{\R}{\frac{GR(p^a,m)[x]}{\id{x^{p^s}+1}}}
\newcommand{\T}{\frac{GR(2^a,m)[x]}{\id{x^{2^s}-1}}}
\newcommand{\Tm}{\mathcal{T}_m}
\def\cent{60}
\def\ra{110}
\def\rb{100}
\def\rc{80}
\def\rd{60}
\def\re{40}
\def\rf{20}
\def\rg{10}
\def\spa{0}
\def\spb{40}
\def\spc{80}
\def\spd{120}
\newcommand{\figa}
{
%\begin{figure}[b]
%\caption{Ideal lattice of $\frac{Z_{2^2}[x]}{\id{x^3+1}}$}
\label{fig.one}

\xy  \POS (\cent,\ra)
*{\id{1}}="a1", (\cent,\rb) *{\id{3,x+1}}="b1", (\spa,\rc)
*{\id{3,(x+1)^2}}="c1", (\spb,\rc) *{\id{3+(x+1)}}="c2", (\spc,\rc)
*{\id{3+2(x+1)}}="c3", (\spd,\rc) *{\id{x+1}}="c4", (\spa,\rd)
*{\id{3+(x+1)^2,3(x+1)}}="d1", (\spb,\rd) *{\id{3}}="d2", (\spc,\rd)
*{\id{3+2(x+1)^2}}="d3", (\spd,\rd) *{\id{(x+1)^2}}="d4", (\spa,\re) *{\id{3+6(x+1)+(x+1)^2}}="e1", (\spb,\re) *{\id{3+3(x+1)+(x+1)^2}}="e2", (\spc,\re) *{\id{3+(x+1)^2}}="e3", (\spd,\re)
*{\id{3(x+1)}}="e4", (\cent,\rf) *{\id{3(x+1)^2}}="f1", (\cent,\rg)
*{\id{0}}="g1"

\POS"a1" \ar @{-} "b1"

\POS"b1" \ar @{-} "c1"

\POS"b1" \ar @{-} "c2"

\POS"b1" \ar @{-} "c3"

\POS"b1" \ar @{-} "c4"

\POS"c1" \ar @{-} "d1"

\POS"c1" \ar @{-} "d2"

\POS"c1" \ar @{-} "d3"

\POS"c1" \ar @{-} "d4"

\POS"c2" \ar @{-} "d4"

\POS"c3" \ar @{-} "d4"

\POS"c4" \ar @{-} "d4"

\POS"d1" \ar @{-} "e1"

\POS"d1" \ar @{-} "e2"

\POS"d1" \ar @{-} "e3"

\POS"d1" \ar @{-} "e4"

\POS"d2" \ar @{-} "e4"

\POS"d3" \ar @{-} "e4"

\POS"d4" \ar @{-} "e4"

\POS"e1" \ar @{-} "f1"

\POS"e2" \ar @{-} "f1"

\POS"e3" \ar @{-} "f1"

\POS"e4" \ar @{-} "f1"

\POS"f1" \ar @{-} "g1"
\endxy
%\end{figure}
}
\numberwithin{equation}{section}
\title [On the Hamming weight of Repeated Root Codes]{On the Hamming weight of Repeated Root Cyclic and Negacyclic Codes over Galois Rings}
\author{Sergio R. L\'opez-Permouth}
\address{Department of Mathematics, Ohio University, Athens,
Ohio-45701, USA}
\author{Steve Szabo}
\address{Department of Mathematics, Ohio University, Athens,
Ohio-45701, USA}
\date{}
\begin{document}
\maketitle
\begin{center}

\end{center}

\begin{abstract}
Repeated root Cyclic and Negacyclic codes over Galois rings have
been studied much less than their simple root counterparts. This
situation is beginning to change. For example, repeated root codes
of length $p^s$, where $p$ is the characteristic of the alphabet
ring, have been studied under some additional hypotheses. In each
one of those cases, the ambient space for the codes has turned out
to be a chain ring. In this paper, all remaining cases of cyclic and
negacyclic codes of length $p^s$ over a Galois ring alphabet are
considered. In these cases the ambient space is a local ring with
simple socle but not a chain ring. Nonetheless, by reducing the
problem to one dealing with uniserial subambients, a method for
computing the Hamming distance of these codes is provided.
\end{abstract}

\section{Introduction}
Cyclic and negacyclic codes have been studied extensively in many
contexts, beginning with their linear versions over finite fields
and continuing on to the study of such codes over a finite ring
alphabet $A$. A common element in the study of these codes is that
they are precisely the submodules of the free module $A^n$ that
correspond to the ideals of a suitable ring $R_n$ which is
isomorphic to $A^n$ as an $A$-module.  The ring $R_n$ is either the
quotient $\frac{A[x]}{\id{x^n-1}}$ (for the cyclic case) or the
quotient ring $\frac{A[x]}{\id{x^n+1}}$ (for the negacyclic case).
In either case, we refer to the ring $R_n$ as \textit{the ambient
space} or \textit{ambient ring} for the codes. While the literature
on cyclic and negacylic codes over chain rings (such as Galois
rings) has grown in leaps and bounds (see
\cite{calderbank_1995,kanwar_1997,kiah_2008,pless_1996,salagean_2006,wolfmann_1999}),
in most instances the studies have been focused only on the cases
where the characteristic of the alphabet ring is coprime to the code
length, the so-called simple root codes. A few of the contributions
to the study of the cases where the characteristic of the alphabet
ring is not coprime the the code length (repeated root codes) are
\cite{abualrub_2003,abualrub_2003_2,castagnoli_1991,dougherty_2007,
ozbudak, vanlint_1991}. In this paper we focus on the repeated root
case where the code length is in fact a power of a prime.

Let $p$ be a prime and consider cyclic and negacyclic codes of
length $p^s$ over $GR(p^a,m)$. The study of such codes was started
in \cite{dinh_2004} for the negacyclic case when $p=2$ and $m=1$. It
was shown there that the ambient $\frac{Z_{2^a}[x]}{\id{x^{p^s}+1}}$
is a chain ring. This result was extended to the case when $m$ is
arbitrary in \cite{dinh_2005}. The distances for most of these codes
was calculated there. The chain ring structure of the ambient was
heavily used to accomplish this goal.

When $a=1$, the Galois ring $GR(p^a,m)$ is just the Galois field
$F_{p^m}$. Codes over $F_{p^m}$ were consider in \cite{dinh_2008}.
There it was shown that for arbitrary $p$, the ambient space
$\frac{F_{p^m}[x]}{\id{x^{p^s}+1}}$ is a chain ring. Once again the
chain structure of the ambient space was used to compute all the
code distances. Then it was shown also in \cite{dinh_2008} that
$\frac{F_{p^m}[x]}{\id{x^{p^s}+1}}\cong
\frac{F_{p^m}[x]}{\id{x^{p^s}-1}}$ when $p$ is odd, which allows all
the negacyclic results to be carried over to the cyclic code case.
It should be noted that over a field of characteristic 2, there is
no distinction between cyclic and negacyclic codes since
$\frac{F_{2^m}[x]}{\id{x^{p^s}+1}}=
\frac{F_{2^m}[x]}{\id{x^{p^s}-1}}$.

In all cases mentioned so far, the codes correspond to principal
ideals. This is a consequence of the fact that the code ambients are
chain rings. In the remaining cases, which comprise the primary
subject of this paper, the code ambients are no longer chain rings
and in fact, not even PIRs. There are three remaining cases:
negacyclic codes over $GR(p^a,m)$ for odd prime $p$ and $a>1$ of
length $p^s$; cyclic codes of the same type; cyclic codes over
$GR(2^a,m)$ for $a>1$ of length $p^s$. In this paper these remaining
cases are considered and a method for computing the Hamming distance
of any code is provided.

Now, simple root cyclic codes over $Z_{p^m}$ were studied in
\cite{calderbank_1995} where a generating set for such codes was
formulated and it was also proved that these codes are principal
ideals of the ambient ring. An alternative generating set was given
for codes over $Z_4$ in \cite{pless_1996}. This result was extended
to $Z_{p^m}$ in \cite{kanwar_1997} where they also showed the
connection between the two formulations. These results were in turn
extended to simple root cyclic codes over Galois rings in
\cite{wan_1999}.

In a series of papers
(\cite{salagean_2006},\cite{norton_2001},\cite{norton_2003},\cite{norton_2000}),
the idea of Gr\"obner basis was extended to principal ideal rings
and was used to prove the existence of generating sets with certain
desirable properties for cyclic and negacyclic codes over chain
rings. Specifically, they showed that given this generating set, the
code distance can be determined from one particular element in the
generating set. In Section \ref{sect.struct}, we will use this
theory to determine all minimum code distances.

For the most part, the literature preceding
(\cite{salagean_2006},\cite{norton_2001},\cite{norton_2003},\cite{norton_2000}),
failed to address specific distance information about cyclic and
negacyclic codes. The generating sets given in
\cite{calderbank_1995} are based on the factorization of $x^n-1$.
Given a factorization of $x^n-1$, it is still not simple to compute
distances, even in light of the results in \cite{salagean_2006}
mentioned earlier. To use those results in this context, the minimum
weights for all principal ideals are needed. Since it was shown that
simple root cyclic codes over Galois rings are principal, the
results just mentioned bring us no closer to finding distances in
the simple root case. In this paper however, we will show that these
results can be very useful in determining distances in some multiple
root codes where not all of the codes are principal. This method
reduces the problem to finding distance information of related codes
which are principal.

In Section \ref{sect.back}, the necessary background on Galois rings
is given together with other results  that are needed throughout the
paper. Section \ref{sect.struct} considers the class of codes in
$\R$ where $p$ is an odd prime and $a>1$. In this section it is
shown that $\R$ is a local ring with simple socle that is not a
chain ring. Then a method for computing Hamming distances is shown.
Section \ref{sect.struct2} examines cyclic codes. When $p$ is odd,
there is a one-one correspondence between cyclic and negacylcic
codes over $GR(p^a,m)$ of length $p^s$ for odd prime $p$ which is
shown. The remainder of the section is devoted to cyclic codes over
$GR(2^a,m)$ for $a>1$ of length $p^s$. It is shown that $\T$ has a
very similar structure to $\R$ from Section \ref{sect.struct}. Again
a method for computing Hamming distances is shown.

\section{Preliminaries}
\label{sect.back} In this paper, the word \textit{ring} means finite
commutative ring with identity. The only exception is when we talk
about the (infinite) ring $R[x]$ of polynomials with coefficients in
the ring $R$. A \textit{local ring} is a ring with a unique maximal
ideal. Given a commutative ring $R$, the \textit{Jacobson radical}
of $R$, denoted by $J(R)$, is the intersection of all maximal ideals
of $R$ and the \textit{socle} of $R$, denoted by $soc(R)$, is the
sum of all minimal ideals of $R$. A polynomial $f(x)\in R[x]$ is
\textit{regular} if it is not a zero divisor. The following is a
characterization of regular polynomials in polynomial rings over
local rings.

\begin{lemma}[Theorem XIII.2, \cite{mcdonald_1974}]
Let $R$ be a finite local commutative ring and $f(x)\in R[x]$ where
$f(x)=a_0+\dots+a_nx^n$ for $a_i\in R$. The following are
equivalent:
\begin{enumerate}
\item $f$ is a regular polynomial.
\item $\id{a_0,\dots,a_n}=R$.
\item $a_i$ is a unit for some $0\leq i\leq n$.
\item $f(x)\pmod{p}\neq 0$.
\end{enumerate}
\end{lemma}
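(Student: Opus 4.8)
The plan is to close the loop of implications using three standard features of a finite local ring $R$: its unique maximal ideal $M=J(R)$ is precisely the set of non-units, $M$ is nilpotent (as $R$ is finite, hence Artinian), and $R/M$ is a finite field. Throughout, the reduction $f(x)\pmod{p}$ in (4) is read as the coefficientwise map $R[x]\to(R/M)[x]$; for the Galois rings $GR(p^a,m)$ that concern us one has $M=pGR(p^a,m)$, so this agrees with the literal reading. With this in hand, $(2)\Leftrightarrow(3)\Leftrightarrow(4)$ are immediate: $\id{a_0,\dots,a_n}=R$ holds exactly when the $a_i$ do not all lie in $M$, i.e.\ when some $a_i$ is a unit; and $f\not\equiv 0$ modulo $M$ holds exactly when some $\overline{a_i}\neq 0$ in $R/M$, i.e.\ again when some $a_i\notin M$.

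For $(1)\Rightarrow(3)$ I would argue by contraposition. Suppose no $a_i$ is a unit, so every $a_i\in M$. If $M=0$ then $f=0$, a zero divisor. Otherwise choose $N\geq 1$ with $M^{N}=0$ but $M^{N-1}\neq 0$ and pick a nonzero $c\in M^{N-1}$; then each $ca_i$ lies in $M^{N}=0$, so $cf=0$ with $c\neq 0$, and $f$ is not regular.

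The one step with real content is $(3)\Rightarrow(1)$. Assume some coefficient $a_i$ is a unit and, for contradiction, that $fg=0$ for a nonzero $g\in R[x]$. Using nilpotency of $M$, let $t\geq 0$ be maximal with $g\in M^{t}R[x]$. Reducing the identity $fg=0$ modulo $M^{t+1}$, and using that every coefficient of $g$ lies in $M^{t}$ — so that, modulo $M^{t+1}$, multiplying such a coefficient by $a_j$ depends only on $a_j\bmod M$ — one obtains $\overline{f}\,\overline{g}=0$ in $(M^{t}/M^{t+1})[x]$, where $\overline{f}\in(R/M)[x]$ is nonzero by (4) and $\overline{g}\in(M^{t}/M^{t+1})[x]$ is nonzero by maximality of $t$. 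But $M^{t}/M^{t+1}$ is a vector space over the field $R/M$, so $(M^{t}/M^{t+1})[x]$ is a free, hence torsion-free, module over the integral domain $(R/M)[x]$, and a nonzero element of a domain cannot annihilate a nonzero element of a torsion-free module — a contradiction. Hence $g=0$ and $f$ is regular. (A slicker route invokes McCoy's theorem: a zero divisor of $R[x]$ is annihilated by some nonzero constant $c\in R$; then $cf=0$ forces $ca_i=0$ for the unit $a_i$, so $c=0$, again a contradiction.)

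I expect the only genuine obstacle to be the bookkeeping in $(3)\Rightarrow(1)$ — making the reduction of $fg=0$ modulo $M^{t+1}$ precise and identifying its target as a torsion-free $(R/M)[x]$-module (equivalently, citing McCoy); the remaining implications are formal consequences of locality and the nilpotency of $M$.
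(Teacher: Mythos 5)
The paper does not prove this lemma at all: it is quoted verbatim as Theorem XIII.2 of McDonald's book, so there is no in-paper argument to compare against. Judged on its own, your proof is correct and complete. The cycle $(2)\Leftrightarrow(3)\Leftrightarrow(4)$ from locality is exactly right, and your contrapositive for $(1)\Rightarrow(3)$ via a nonzero $c\in M^{N-1}$ annihilating all coefficients is the standard argument. The substantive step $(3)\Rightarrow(1)$ is also sound: the graded reduction of $fg=0$ into $(M^{t}/M^{t+1})[x]$, viewed as a torsion-free module over the domain $(R/M)[x]$, does force $g=0$; your parenthetical appeal to McCoy's theorem is an equally valid shortcut and is essentially how the result is usually derived. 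One point you were right to flag explicitly: condition (4) as literally written ($f\bmod p\neq 0$) only matches the other three when the maximal ideal is $pR$, as it is for the Galois rings used in the paper; for a general finite local ring (e.g.\ $F_2[t]/\langle t^2\rangle$, where $p=0$ in $R$) the reduction must be taken modulo $M$, which is the reading you adopt and the one McDonald intends. No gaps.
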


Polynomial rings over local rings admit a division algorithm for
certain polynomials.

\begin{lemma}[Proposition 3.4.4, \cite{bini_2002}]
\label{lemma.div} Let $R$ be a finite local commutative ring and
$f(x),g(x)\in R[x]$ where $g(x)$ is regular. Then there exists
$q(x),r(x)\in R[x]$ such that
\[
f(x)=g(x)q(x)+r(x)
\]
with $deg(r)<deg(g)$ or $r(x)=0$.
\end{lemma}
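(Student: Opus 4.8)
The plan is to reduce the statement to ordinary division by a monic polynomial, by way of a Weierstrass-type factorization of the regular divisor $g$. Set $J=J(R)$; since $R$ is finite it is Artinian, so $J$ is nilpotent, say $J^{t}=0$, and hence $J[x]$ is a nilpotent ideal of $R[x]$, so that an element of $R[x]$ is a unit precisely when its image in $(R/J)[x]$ is a unit. By the characterization in the previous lemma, $g$ is regular exactly when its image $\bar g$ in $(R/J)[x]$ is nonzero; put $d=\deg\bar g$ and note $d\le\deg g$, with equality iff the leading coefficient of $g$ is a unit. The obstacle that must be faced is that, in general, the leading coefficient of a regular $g$ need \emph{not} be a unit — for instance $g=2x^{2}+x+1$ over $Z_{4}$ — so the naive division algorithm does not apply, and the eventual quotient can even have degree larger than $\deg f-\deg g$.

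The heart of the argument is the claim that $g=u\,g^{*}$ for some unit $u\in R[x]$ and some monic $g^{*}\in R[x]$ of degree $d$. Over the residue field one factors $\bar g=\bar a\,\bar g^{*}$ with $\bar a$ the invertible leading coefficient of $\bar g$ and $\bar g^{*}$ monic of degree $d$; lifting to a unit $a\in R$ and a monic $g^{*}_{1}\in R[x]$ of degree $d$ gives $g\equiv a\,g^{*}_{1}\pmod{J[x]}$. I would then sharpen this congruence one power of $J$ at a time: given $u_{i},g^{*}_{i}$ with $g\equiv u_{i}g^{*}_{i}\pmod{J^{i}[x]}$, $g^{*}_{i}$ monic of degree $d$, and $u_{i}\equiv a\pmod{J[x]}$ (hence a unit of $R[x]$), divide the element $u_{i}^{-1}(g-u_{i}g^{*}_{i})\in J^{i}[x]$ by the \emph{monic} polynomial $g^{*}_{i}$ to obtain $u_{i}^{-1}(g-u_{i}g^{*}_{i})=g^{*}_{i}Q+\delta$ with $\deg\delta<d$ and $Q,\delta\in J^{i}[x]$; then $u_{i+1}=u_{i}(1+Q)$ and $g^{*}_{i+1}=g^{*}_{i}+\delta$ again satisfy the hypotheses one step further, since $u_{i+1}g^{*}_{i+1}-g=u_{i}Q\delta\in J^{2i}[x]\subseteq J^{i+1}[x]$. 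Because $J^{t}=0$, after finitely many steps the congruence becomes the equality $g=u\,g^{*}$.

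Granting the factorization, the conclusion is immediate: apply the classical division algorithm for the monic divisor $g^{*}$ (valid over any commutative ring) to write $f=g^{*}q^{*}+r$ with $\deg r<d\le\deg g$ or $r=0$, and then $f=g\,(u^{-1}q^{*})+r$, where $u^{-1}q^{*}\in R[x]$ because $u$ is a unit of $R[x]$. The only routine points to check are that division by a monic polynomial keeps the quotient and remainder inside the ideal generated by the coefficients of the dividend (so that $Q,\delta$ indeed remain in $J^{i}[x]$), and the degenerate case $d=0$, in which $\bar g$ is a nonzero constant, $g$ is already a unit of $R[x]$, and the assertion reduces to $g\mid f$.
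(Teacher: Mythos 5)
Your argument is correct. The paper does not actually prove this lemma --- it is quoted verbatim from Proposition 3.4.4 of the cited reference --- so there is no in-text proof to compare against; judged on its own, your proposal is a complete and valid derivation. The route you take is the standard one: factor the regular divisor as $g=u\,g^{*}$ with $u$ a unit of $R[x]$ and $g^{*}$ monic of degree $d=\deg\bar g$, then invoke ordinary division by a monic polynomial. All the delicate points check out: regularity of $g$ is indeed equivalent to $\bar g\neq 0$ by the preceding lemma (a coefficient is a unit iff it lies outside $J$); division by a monic polynomial does keep quotient and remainder in $I[x]$ when the dividend lies in $I[x]$, which is what keeps $Q,\delta\in J^{i}[x]$; the successive-approximation step gives $u_{i+1}g^{*}_{i+1}-g=u_iQ\delta\in J^{2i}[x]\subseteq J^{i+1}[x]$ for $i\geq 1$, so nilpotence of $J$ terminates the iteration; and $1+Q$ is a unit because $Q$ is nilpotent. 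Your example $g=2x^{2}+x+1$ over $Z_4$ (where $g=(2x+3)(x+3)$ with $2x+3$ a unit) correctly illustrates why the naive division algorithm fails and why the remainder bound one obtains, $\deg r<d\leq\deg g$, is in fact sharper than the one stated. The only cosmetic remark is that your conclusion delivers this stronger bound, which of course implies the lemma as stated.
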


A \textit{chain ring} is a ring whose ideals are linearly ordered by
inclusion. The following characterization of chain rings is well-known:

\begin{lemma}
\label{lemma.chain} Let $R$ be a finite commutative ring. The
following are equivalent:
\begin{enumerate}
\item $R$ is a chain ring.
\item $R$ is a local principal ideal ring.
\item $R$ is a local ring with maximal ideal that is principal.
\end{enumerate}
\end{lemma}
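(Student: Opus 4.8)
The plan is to prove the three conditions equivalent via the cycle of implications $(1)\Rightarrow(2)\Rightarrow(3)\Rightarrow(1)$. The only external facts needed are that a finite commutative ring is Artinian, and hence: it has maximal ideals, with every proper ideal contained in one of them; its Jacobson radical is nilpotent; and every ideal is finitely generated (here trivially, since the ring itself is finite). Each implication is then a short direct argument.

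For $(1)\Rightarrow(2)$, assume the ideals of $R$ are linearly ordered by inclusion. Two distinct maximal ideals would be incomparable, contradicting linearity, so $R$ is local. Given any ideal $I=\langle a_1,\dots,a_k\rangle$ (finitely generated since $R$ is finite), the principal ideals $\langle a_1\rangle,\dots,\langle a_k\rangle$ are linearly ordered, so one of them, say $\langle a_j\rangle$, contains all the others; then $I=\langle a_j\rangle$ is principal, and $R$ is a local principal ideal ring. The implication $(2)\Rightarrow(3)$ is immediate, as the maximal ideal is in particular an ideal and so is principal.

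For $(3)\Rightarrow(1)$, write $J(R)=\mathfrak{m}=\langle\pi\rangle$. Since $R$ is Artinian, $\mathfrak{m}$ is nilpotent; let $N$ be least with $\mathfrak{m}^N=0$. The powers $\mathfrak{m}^i=\langle\pi^i\rangle$ form a chain $R=\mathfrak{m}^0\supseteq\mathfrak{m}^1\supseteq\cdots\supseteq\mathfrak{m}^N=0$, so it suffices to show every nonzero ideal $I$ equals one of them. Given such $I$, let $i$ be the largest index with $I\subseteq\mathfrak{m}^i$; this is well defined, since $I\subseteq R$ and $I\neq0$ forces $I\not\subseteq\mathfrak{m}^N$. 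Pick $a\in I\setminus\mathfrak{m}^{i+1}$ and write $a=r\pi^i$. If $r$ were a non-unit it would lie in $\mathfrak{m}$, forcing $a\in\mathfrak{m}^{i+1}$, a contradiction; hence $r$ is a unit, $\pi^i=r^{-1}a\in I$, and $\mathfrak{m}^i=\langle\pi^i\rangle\subseteq I\subseteq\mathfrak{m}^i$, so $I=\mathfrak{m}^i$. Thus the ideals of $R$ are exactly $R,\mathfrak{m},\dots,\mathfrak{m}^N$, which are linearly ordered, and $R$ is a chain ring.

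The only genuine content lies in $(3)\Rightarrow(1)$, and within it the single delicate point is invoking locality to conclude that the coefficient $r$ is a unit; the rest is bookkeeping with the nilpotency filtration of $\mathfrak{m}$. I do not anticipate a real obstacle: finiteness is used only as a convenient stand-in for the Artinian hypotheses, and the argument goes through verbatim for Artinian local rings.
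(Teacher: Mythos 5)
Your proof is correct. The paper states this lemma as well-known and gives no proof at all, so there is nothing to compare against; your cycle $(1)\Rightarrow(2)\Rightarrow(3)\Rightarrow(1)$ is the standard argument, and the one genuinely delicate step --- using locality to conclude that the coefficient $r$ in $a=r\pi^i$ is a unit, which rests on the fact that in a local ring the non-units are exactly the maximal ideal --- is handled correctly, as is the well-definedness of the largest $i$ with $I\subseteq\mathfrak{m}^i$ via nilpotency of $\mathfrak{m}$.
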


Galois rings constitute a very important family of finite chain
rings. They can be defined as follows: Let $f(x)\in Z_{p^a}[x]$ be a
basic irreducible polynomial (a \textit{basic irreducible}
polynomial in $Z_{p^m}[x]$ is an irreducible polynomial in
$Z_{p^a}[x]$ whose reduction modulo $p$ is irreducible in
$Z_{p}[x]$) and $m=deg(f)$. Then the Galois ring
$GR(p^a,m)=\frac{Z_{p^a}[x]}{\id{f(x)}}$. It is well-known that
different choices of $m$ and $a$ yield non-isomorphic Galois rings
while, on the other hand, distinct choices of $f(x)$ with the same
degree $m$ yield the same Galois ring up to isomorphism. We now list
a few pertinent details about these rings. For a more detailed
account of the theory of Galois rings including proofs of the
results we mention here, see \cite{mcdonald_1974} or
\cite{wan_1999}.

Every Galois ring $R=GR(p^a,m)$ contains a $(p^m-1)^{th}$ primitive
root of unity $\zeta$. Every $r\in R$ has a $p$-adic expansion
$r=\zeta_0+\zeta_1p+\dots+\zeta_{a-1}p^{a-1}$ where $\zeta_i\in
\{0,1, \zeta,\zeta^2,\dots,\zeta^{p^m-2}\}$, the Teichm\"uller set
$\Tm$ of $R$.

Given a polynomial $f(x)$ in any polynomial ring $R[x]$, $f$ can be
viewed in the form $f(x)=\sum_{i=0}^{k}a_i(x+1)^i$ where $a_i\in R$.
So, for $f\in GR(p^a,m)[x]$, $f(x)=\sum_{i=0}^k \sum_{j=0}^a
\zeta_{ij}p^j(x+1)^i$ where $\zeta_{ij}\in \{0,1,
\zeta,\zeta^2,\dots,\zeta^{p^m-2}\}$.

The next two Lemmas are results on negacyclic code ambients over
Galois rings which will be needed in the proceeding sections.
Defining multiplication of $r\in\R$ by
$m\in\frac{GR(p,m)[x]}{\id{x^{p^s}+1}}$ as multiplication in $\R$
mod $p$, $\frac{GR(p,m)[x]}{\id{x^{p^s}+1}}$ can be made into an
$\R$-module. In light of this, the following lemma is easy to see.

\begin{lemma}
\label{lemma.iso} For any prime $p$, the $\R$-modules $p^{a-1}\R$
and $\frac{GR(p,m)[x]}{\id{x^{p^s}+1}}$ are isomorphic.
\end{lemma}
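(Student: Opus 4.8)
The claim is that $p^{a-1}\R$ and $\frac{GR(p,m)[x]}{\id{x^{p^s}+1}}$ are isomorphic as $\R$-modules, where the module structure on the latter is the one just described (multiply by $r\in\R$ by first reducing $r$ mod $p$). The natural candidate for the isomorphism is the map $\varphi$ induced by multiplication by $p^{a-1}$: given $\bar f\in\frac{GR(p,m)[x]}{\id{x^{p^s}+1}}$, lift $f$ to a polynomial in $GR(p^a,m)[x]$ (say with Teichmüller coefficients, or any lift), and send $\bar f\mapsto p^{a-1}f+\id{x^{p^s}+1}\in p^{a-1}\R$. The plan is to show this is a well-defined $\R$-module homomorphism that is both surjective and injective.

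First I would check well-definedness: if $f\equiv g\pmod{p}$ in $GR(p^a,m)[x]$, then $f-g=ph$ for some $h$, so $p^{a-1}(f-g)=p^a h\equiv 0$ in $GR(p^a,m)[x]$, hence $p^{a-1}f$ and $p^{a-1}g$ agree in $\R$; and if $f$ and $g$ differ by a multiple of $x^{p^s}+1$ over $GR(p,m)$, the same holds after multiplying by $p^{a-1}$ modulo $\id{x^{p^s}+1}$ in $\R$. Next, $\R$-linearity: for $r\in\R$, the module action sends $\bar f$ to $(\bar r\,\bar f)$ where $\bar r$ is $r$ mod $p$; applying $\varphi$ gives $p^{a-1}\cdot(\text{lift of }\bar r\bar f)$, and one must see this equals $r\cdot(p^{a-1}f)$ in $\R$. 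This is because $rp^{a-1}f - \bar r p^{a-1} f = (r-\bar r)p^{a-1}f \in p\cdot p^{a-1}\R = p^a\R = 0$, so the $p^{a-1}$ factor absorbs the discrepancy between $r$ and its reduction. Surjectivity is immediate since every element of $p^{a-1}\R$ has the form $p^{a-1}f+\id{x^{p^s}+1}$ by definition.

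The one point requiring a little care — the main obstacle — is injectivity, i.e. that the kernel is trivial. Suppose $p^{a-1}f\equiv 0$ in $\R$, meaning $p^{a-1}f=(x^{p^s}+1)q$ in $GR(p^a,m)[x]$ for some $q$. I would argue that the coefficients of $q$ must themselves be divisible by $p^{a-1}$: reduce mod $p$ to get $(x^{p^s}+1)\bar q=0$ in $GR(p,m)[x]$, and since $x^{p^s}+1$ is regular (its leading coefficient is a unit), Lemma~\ref{lemma.div} / the fact that $GR(p,m)[x]$ has no zero divisors attached to regular polynomials forces $\bar q=0$, so $q=pq_1$; feeding this back and canceling $p$ repeatedly (or invoking that $x^{p^s}+1$ is a non-zero-divisor in $GR(p^a,m)[x]$ by Lemma stated in the excerpt) shows $f\equiv 0\pmod{p}$, i.e. $\bar f=0$ in $\frac{GR(p,m)[x]}{\id{x^{p^s}+1}}$. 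Alternatively, and more cleanly, one counts: $\bigl|\frac{GR(p,m)[x]}{\id{x^{p^s}+1}}\bigr| = p^{m p^s}$, and $p^{a-1}\R\cong \R/\mathrm{ann}(p^{a-1})$, and an annihilator computation gives the same cardinality, so a surjection between finite sets of equal size is a bijection. I would present the counting argument as the quickest route, with the regular-polynomial cancellation as the conceptual reason behind it.
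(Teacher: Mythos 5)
Your proof is correct, and it supplies exactly the details the paper omits: the paper merely defines the $\R$-module structure on $\frac{GR(p,m)[x]}{\id{x^{p^s}+1}}$ and declares the lemma ``easy to see,'' with multiplication by $p^{a-1}$ being the evident intended isomorphism. Your well-definedness and linearity checks are right, and the cardinality argument for injectivity (both modules are free of rank $p^s$ over $F_{p^m}$, hence have $p^{mp^s}$ elements, so your surjection is a bijection) is the cleanest way to close the argument.
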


\begin{lemma}[\cite{dinh_2008}, Proposition 3.2]
\label{lemma.hai} For any prime $p$, the ambient ring
$\frac{GR(p,m)[x]}{\id{x^{p^s}+1}}$ is a chain ring with exactly the
following deals,
\[
\frac{GR(p,m)[x]}{\id{x^{p^s}+1}}=\id{(x+1)^0}\supsetneq\dots\supsetneq\id{(x+1)^{p^s}}=0.
\]
\end{lemma}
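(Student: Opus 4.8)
The plan is to reduce to a familiar chain ring by exploiting the characteristic-$p$ identity $x^{p^s}+1=(x+1)^{p^s}$. Indeed, since $GR(p,m)=F_{p^m}$ has characteristic $p$, the Frobenius map is a ring endomorphism of $F_{p^m}[x]$, so $(x+1)^{p}=x^{p}+1$ and hence, by induction on $s$, $(x+1)^{p^s}=x^{p^s}+1$. Writing $y=x+1$, this identifies the ambient ring $\frac{GR(p,m)[x]}{\id{x^{p^s}+1}}$ with $\frac{F_{p^m}[y]}{\id{y^{p^s}}}$, and the entire argument can be carried out in this presentation and then translated back via $y=x+1$.

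First I would establish that $S:=\frac{F_{p^m}[y]}{\id{y^{p^s}}}$ is local with principal maximal ideal $\id{y}=\id{x+1}$. Every element of $S$ has a representative of the form $f(y)=\sum_{i=0}^{p^s-1}c_iy^i$ with $c_i\in F_{p^m}$. If $c_0\neq 0$ then $f=c_0(1+n)$ where $n=c_0^{-1}(c_1y+\dots+c_{p^s-1}y^{p^s-1})$ lies in the nilpotent ideal $\id{y}$ and is therefore nilpotent, so $f$ is a unit. Thus the non-units of $S$ are exactly the elements of $\id{y}$, which is consequently the unique maximal ideal. Since this maximal ideal is principal, Lemma \ref{lemma.chain} shows that $S$ is a chain ring.

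Next I would list the ideals. Using the representatives above, any nonzero element of $S$ can be written as $y^k u$ with $0\le k<p^s$ and $u$ a unit (pull out the least power of $y$ appearing in its degree-$<p^s$ representative; the cofactor has nonzero constant term, hence is a unit by the previous step). Therefore, given an ideal $I\neq 0$, if $k$ is the least integer for which $I$ contains an element $y^k u$ with $u$ a unit, then $y^k\in I$ while every element of $I$ lies in $\id{y^k}$, so $I=\id{y^k}$. This yields the descending chain
\[
S=\id{y^0}\supseteq\id{y^1}\supseteq\dots\supseteq\id{y^{p^s}}=0,
\]
and translating through $y=x+1$ gives the chain in the statement.

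Finally I would check that the inclusions are strict, which is really the only point needing a small argument. If $y^i\in\id{y^{i+1}}$ for some $0\le i<p^s$, then in $F_{p^m}[y]$ one would have $y^i=y^{i+1}g(y)+y^{p^s}h(y)$ for some polynomials $g,h$; but every term on the right-hand side has degree at least $i+1$, whereas the left side has a nonzero coefficient in degree $i$, a contradiction. Hence $\id{y^i}\supsetneq\id{y^{i+1}}$ for $0\le i<p^s$, so the displayed chain consists of exactly $p^s+1$ distinct ideals and is the complete list. The conceptual crux is the rewriting $x^{p^s}+1=(x+1)^{p^s}$; once that is in hand the remaining steps are just the standard structure theory of $F_{p^m}[y]/\id{y^{p^s}}$.
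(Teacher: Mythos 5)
Your proof is correct. The paper itself gives no argument for this lemma---it is quoted directly from Proposition 3.2 of \cite{dinh_2008}---and your reduction via the characteristic-$p$ identity $x^{p^s}+1=(x+1)^{p^s}$ to the standard chain ring $F_{p^m}[y]/\id{y^{p^s}}$ is exactly the standard argument used in that source, with all the needed details (locality, units, classification of ideals, strictness of the inclusions) checked.
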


In \cite{norton_2001} an algorithm was given to find a Gr\"obner
basis for ideals of a polynomial ring over a PIR. Later in
\cite{salagean_2006}, it is shown that any ideal of a residue ring
of a polynomial ring over a chain ring has a Gr\"obner basis with
certain additional properties. Since for any prime $p$, $GR(p^a,m)$
is a chain ring, ideals of $\R$ will have such a Gr\"obner basis.
The following Lemma is a restatement of that result.

\begin{lemma}[adapted from Theorem 4.1 in \cite{salagean_2006}]
\label{lemma.ideal} For any prime $p$, given an ideal
$I\vartriangleleft\R$, for $i\in\{0,\dots,r\}$ there exist $j_i\in
Z$ and $f_i\in \R$ where ${0\leq r \leq a-1}$ such that
\[
I=\id{p^{j_0}f_0,\dots,p^{j_r}f_{r}}
\]
 and
\begin{enumerate}
\item $0\leq j_0<\dots<j_{r}\leq a-1$
\item $f_i$ monic for $i=0,\dots,r$,
\item $p^s>deg(f_0)>\dots>deg(f_r)$,
\item $p^{j_{i+1}}f_i\in\id{p^{j_{i+1}}f_{i+1},\dots,p^{j_{r}}f_{r}}$
\item $p^{j_0}(x^{p^s}+1)\in\id{p^{j_0}f_0,\dots,p^{j_r}f_{r}}$ in
$GR(p^a,m)[x]$.
\end{enumerate}
\end{lemma}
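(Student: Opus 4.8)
The plan is to obtain the statement as the specialization of the structure theorem for ideals of a residue ring of a polynomial ring over a finite chain ring --- Theorem~4.1 of \cite{salagean_2006} --- to the case $R=GR(p^a,m)$ and modulus $x^{p^s}+1$. The first task is to confirm that the hypotheses of that theorem are met. By the facts recalled in this section (and Lemma~\ref{lemma.chain}), $GR(p^a,m)$ is a finite chain ring whose unique maximal ideal is $\id{p}$ and whose nilpotency index is $a$; thus $p$ serves as the uniformizer and each nonzero element of $GR(p^a,m)$ is a unit times a uniquely determined power $p^{j}$ with $0\le j\le a-1$. Moreover $x^{p^s}+1$ is monic, so Lemma~\ref{lemma.div} provides exactly the division algorithm on which the Gr\"obner basis machinery of \cite{norton_2001,salagean_2006} depends. (Those references handle the negacyclic modulus as well as $x^n-1$; in any event their arguments use only monicity of the modulus, so the case $x^{p^s}+1$ goes through verbatim.)

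Granting the hypotheses, Theorem~4.1 of \cite{salagean_2006} yields a minimal strong Gr\"obner basis of $I$ of the form $\{p^{j_0}f_0,\dots,p^{j_r}f_r\}$, and one reads off (1)--(5) as follows. Replacing each leading coefficient, which has the shape $up^{j_i}$ with $u$ a unit, by $p^{j_i}$ makes every $f_i$ monic: this is (2). The strict increase $j_0<\dots<j_r$, the strict decrease $deg(f_0)>\dots>deg(f_r)$, and the bound $deg(f_0)<p^s$ are the normalizations imposed by minimality: if any failed, some generator would be reducible modulo another via Lemma~\ref{lemma.div}, contradicting minimality; this gives (1) and (3), and, since the $j_i$ are distinct members of $\{0,1,\dots,a-1\}$, the bound $r\le a-1$. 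Condition (4) --- that $p^{j_{i+1}}f_i$, the least $p$-multiple of $p^{j_i}f_i$ reaching the next $p$-level, reduces to $0$ modulo the later generators --- is the defining cancellation property of a (reduced) strong Gr\"obner basis; the first division step is available because $deg(f_{i+1})<deg(f_i)$ makes the leading term of $p^{j_{i+1}}f_i$ a multiple of that of $p^{j_{i+1}}f_{i+1}$. For (5), note that $x^{p^s}+1$ generates the kernel of the projection $\pi\colon GR(p^a,m)[x]\to\R$, so $\pi^{-1}(I)=\id{p^{j_0}f_0,\dots,p^{j_r}f_r,x^{p^s}+1}$ in $GR(p^a,m)[x]$, with $\{p^{j_0}f_0,\dots,p^{j_r}f_r\}\cup\{x^{p^s}+1\}$ a strong Gr\"obner basis of it; since the leading term of $p^{j_0}(x^{p^s}+1)$ is a multiple of that of $p^{j_0}f_0$, a single division step replaces $p^{j_0}(x^{p^s}+1)$ by an element of $\pi^{-1}(I)$ of degree $<p^s$, and such an element reduces to $0$ using only $p^{j_0}f_0,\dots,p^{j_r}f_r$ --- the modulus, having leading term of degree $p^s$, cannot enter a reduction of lower degree --- so $p^{j_0}(x^{p^s}+1)\in\id{p^{j_0}f_0,\dots,p^{j_r}f_r}$, which is (5).

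I expect the main obstacle to be organizational rather than computational: making sure that $\R$ genuinely matches the template of \cite{salagean_2006} and extracting the precise normalizations (1)--(5) from the way those results are phrased, with condition (4) (the exact form of the overlap/cancellation relation) and condition (5) (a statement back in $GR(p^a,m)[x]$) being the two that are easiest to mis-state. Should a self-contained argument be preferred to a citation, I would instead construct the basis by induction: choose $p^{j_0}f_0\in I$ with $j_0$ minimal and, among elements of level $j_0$, of least degree; normalize $f_0$ to be monic; pass to the ideal of residues obtained after reducing the elements of $I$ modulo $p^{j_0}f_0$; and recurse, with termination guaranteed since the levels strictly increase within $\{0,\dots,a-1\}$. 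In that route the delicate points are, once more, the strict decrease of the degrees and the overlap condition (4), both of which reduce to careful use of Lemma~\ref{lemma.div}.
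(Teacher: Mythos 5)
Your proposal is correct and follows essentially the same route as the paper, which states this lemma without proof as a direct specialization of Theorem~4.1 of \cite{salagean_2006} to the chain ring $GR(p^a,m)$ and the monic modulus $x^{p^s}+1$. Your verification of the hypotheses and your reading-off of conditions (1)--(5) simply make explicit what the paper leaves to the citation.
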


One can show further that the set of generators in Lemma
\ref{lemma.ideal} is a strong Gr\"obner basis in the sense of
\cite{salagean_2006}. While interesting, this fact will not be used
here.

\begin{lemma}
\label{lemma.pd} Let p be a prime. Let $k\leq\frac{p^n}{2}$ and $l$
be the largest integer s.t. $p^l\mid k$. Then
$p^{n-l}\mid\nck{p^n}{k}$.
\end{lemma}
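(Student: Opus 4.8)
The plan is to read off the $p$-adic valuation of $\nck{p^n}{k}$ from the elementary identity
\[
k\,\nck{p^n}{k}=p^n\,\nck{p^n-1}{k-1},
\]
which is valid for every integer $k$ with $1\le k\le p^n$, being nothing more than the rearrangement $k\cdot\frac{(p^n)!}{k!\,(p^n-k)!}=p^n\cdot\frac{(p^n-1)!}{(k-1)!\,(p^n-k)!}$. Writing $v_p$ for the $p$-adic valuation on $Z$, I would apply $v_p$ to both sides to obtain
\[
v_p(k)+v_p\!\left(\nck{p^n}{k}\right)=n+v_p\!\left(\nck{p^n-1}{k-1}\right).
\]
Since $\nck{p^n-1}{k-1}$ is a positive integer, its valuation is nonnegative, so rearranging gives $v_p\!\left(\nck{p^n}{k}\right)\ge n-v_p(k)=n-l$, which is exactly the assertion that $p^{n-l}\mid\nck{p^n}{k}$.

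I do not anticipate a genuine obstacle: the proof is essentially this single identity together with the multiplicativity of $v_p$. The only points requiring a little care are that $k\ge 1$ (so that $l=v_p(k)$ is defined and $\nck{p^n-1}{k-1}$ makes sense), and the observation that the hypothesis $k\le p^n/2$ is actually stronger than what is used here — any $k$ with $1\le k\le p^n$ would do.

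For completeness one could instead invoke Kummer's theorem, according to which $v_p\!\left(\nck{p^n}{k}\right)$ equals the number of carries occurring in the base-$p$ addition of $k$ and $p^n-k$. Since $v_p(k)=l$ means the base-$p$ digits of $k$ (hence also of $p^n-k$) in positions $0,\dots,l-1$ vanish while the digit of $k$ in position $l$ is nonzero, one checks that a carry is forced at each of the positions $l,l+1,\dots,n-1$ and at no other position, giving precisely $n-l$ carries. Either route in fact yields the equality $v_p\!\left(\nck{p^n}{k}\right)=n-l$, which is slightly sharper than the stated divisibility.
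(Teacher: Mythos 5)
Your proof is correct, and it takes a genuinely different route from the paper's. The paper argues by stepping from $\nck{p^n}{k-1}$ to $\nck{p^n}{k}$ via the ratio $\frac{p^n-k+1}{k}$, splitting into three cases according to whether $p$ divides $k$, divides $k-1$, or divides neither, and tracking how the power of $p$ changes at each step; this is in effect an induction on $k$ with a somewhat delicate bookkeeping of valuations of non-integer ratios. Your argument replaces all of that with the single absorption identity $k\nck{p^n}{k}=p^n\nck{p^n-1}{k-1}$ and one application of $v_p$, which immediately yields $v_p\bigl(\nck{p^n}{k}\bigr)\ge n-v_p(k)$ with no case analysis and no induction. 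It also cleanly exposes two facts the paper's treatment obscures: the hypothesis $k\le p^n/2$ is not needed (any $1\le k\le p^n$ works), and the divisibility is in fact an equality $v_p\bigl(\nck{p^n}{k}\bigr)=n-l$, as your Kummer-theorem remark confirms. The only hypothesis you genuinely need, and correctly flag, is $k\ge 1$ so that $l$ is defined. Your version is shorter, more transparent, and slightly stronger than what the paper proves.
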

\begin{proof}
For $k\leq p$, the result holds. Now we proceed in 3 cases. First
assume there is an $l>0$ s.t. $p^l\mid k-1$ and it is the largest
such integer. Then $p^{n-l}\mid\nck{p^n}{k-1}$. Since $p^l\mid k-1$,
$p\nmid k$ and $p^l\mid p^n-k+1$. So, $p^l\mid\frac{p^n-k+1}{k}$.
Hence, $p^{n-l+l}\mid\nck{p^n}{k-1}\frac{p^n-k+1}{k}=\nck{p^n}{k}$.
Now, assume $p\nmid k-1$ and $p\nmid k$. Then $p\nmid p^n-k+1$. So,
for any $l$ s.t. $p^l\mid\nck{p^n}{k-1}$, $p^l\mid\nck{p^n}{k}$.
Noting the previous case, $p^n\mid\nck{p^n}{k}$. Now, assume there
is an $l>0$ s.t. $p^l\mid k$ and it is the largest such integer.
Then $p\nmid k-1$ and so $p\nmid p^n-k+1$. Again noting the previous
cases, $p^{n-l}\mid\nck{p^n}{k}$
\end{proof}

\section{Negacyclic codes in $\R$ for odd prime $p$}
\label{sect.struct} As mentioned earlier, all ambient rings
previously studied in the literature are chain rings. They are
$\frac{GR(2^a,m)[x]}{\id{x^{2^s}+1}}$,
$\frac{GR(p,m)[x]}{\id{x^{p^s}+1}}$ and
$\frac{GR(p,m)[x]}{\id{x^{p^s}-1}}$ for $a,m,p,s\in Z$ where $a\geq
1$, $m\geq 1$, $p$ is prime and $s\geq 0$. In the following sections
the remaining cases will be studied. In these remaining cases, the
ambient spaces are not chain rings. We will show, however, that they
are local rings with simple socle.

In this section, the structure of $\R$ where $p$ is an odd prime and
$a>1$ is studied so that in the following section the structure
details can be used to find Hamming distance of all codes. Since
$s=0$ is the trivial case also assume $s>0$.

We start by showing that $x+1$ is nilpotent. The calculation of its exact
nilpotency is saved for Corollary \ref{cor.nil}.

\begin{proposition}
\label{prop.nilp} In $\R$, $(x+1)$ is nilpotent.
\end{proposition}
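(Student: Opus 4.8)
The plan is to exploit the relation $x^{p^s}=-1$ in $\R$ together with the divisibility properties of binomial coefficients recorded in Lemma \ref{lemma.pd}. First I would expand $(x+1)^{p^s}$ by the binomial theorem inside $\R$:
\[
(x+1)^{p^s}=\sum_{k=0}^{p^s}\nck{p^s}{k}x^k=1+x^{p^s}+\sum_{k=1}^{p^s-1}\nck{p^s}{k}x^k.
\]
Since $x^{p^s}=-1$ in $\R$, the terms $k=0$ and $k=p^s$ cancel, leaving $(x+1)^{p^s}=\sum_{k=1}^{p^s-1}\nck{p^s}{k}x^k$.

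Next I would show that every coefficient occurring here is divisible by $p$. Fix $k$ with $1\leq k\leq p^s-1$ and let $l$ be the largest integer with $p^l\mid k$; then $l\leq s-1$ since $k<p^s$. Using the symmetry $\nck{p^s}{k}=\nck{p^s}{p^s-k}$ to reduce to the case $k\leq\frac{p^s}{2}$, Lemma \ref{lemma.pd} gives $p^{s-l}\mid\nck{p^s}{k}$, and as $s-l\geq1$ this yields $p\mid\nck{p^s}{k}$. Hence $(x+1)^{p^s}\in p\R$, so we may write $(x+1)^{p^s}=p\,g$ for some $g\in\R$.

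Finally, raising to the $a$-th power,
\[
(x+1)^{ap^s}=\bigl((x+1)^{p^s}\bigr)^{a}=(p\,g)^a=p^a g^a=0,
\]
since $p^a=0$ in $GR(p^a,m)$ and hence in $\R$. Therefore $(x+1)$ is nilpotent, with index at most $ap^s$; the exact nilpotency index is the content of Corollary \ref{cor.nil} and is not needed here. I do not anticipate a genuine obstacle in this argument; the only point demanding a little care is the \emph{uniform} divisibility $p\mid\nck{p^s}{k}$ for every $1\leq k\leq p^s-1$, which is precisely what the reduction to Lemma \ref{lemma.pd} supplies. It is worth noting that the hypothesis that $p$ is odd plays no role in this particular step.
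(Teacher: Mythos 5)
Your proof is correct and follows essentially the same route as the paper: expand $(x+1)^{p^s}$ binomially, use $x^{p^s}+1=0$ to cancel the end terms, observe that the remaining binomial coefficients are divisible by $p$, and conclude $(x+1)^{ap^s}=p^a\alpha(x)^a=0$. The only difference is that you spell out the divisibility $p\mid\nck{p^s}{k}$ via Lemma \ref{lemma.pd} and symmetry, where the paper simply asserts it.
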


\begin{proof}
\[
\begin{array}{lll}
(x+1)^{p^s}     & = & x^{p^s}+\nck{p^s}{p^s-1}x^{p^s-1}+\dots+\nck{p^s}{1}x+1\\
                & = & x^{p^s}+1+p\alpha(x)\\
                & = & p\alpha(x)\\
\end{array}
\]
where $\alpha(x)\in \frac{GR(p^a,m)[x]}{\id{x^{p^s}+1}}$. Then
$(x+1)^{p^sa}=p^a(\alpha(x))^a=0$.
\end{proof}

\begin{proposition}
\label{prop.max} The ambient ring $\R$ is local with radical
$J\left(\R\right)=\id{p,x+1}$.
\end{proposition}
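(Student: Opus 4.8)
The plan is to show that $\id{p,x+1}$ is at once a maximal ideal and a nilpotent ideal of $\R$; these two facts together force $\R$ to be local with $J\left(\R\right)=\id{p,x+1}$.

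First I would identify the quotient. Reducing $\R$ modulo $\id{p,x+1}$ gives
\[
\R/\id{p,x+1}\;\cong\;\frac{\left(GR(p^a,m)/pGR(p^a,m)\right)[x]}{\id{x^{p^s}+1,\,x+1}}.
\]
Since $GR(p^a,m)/pGR(p^a,m)\cong F_{p^m}$ and $p$ is odd, in $F_{p^m}[x]$ we have $x^{p^s}+1=x^{p^s}-(-1)^{p^s}=(x+1)^{p^s}$, so $\id{x^{p^s}+1,\,x+1}=\id{x+1}$ and the quotient collapses to $F_{p^m}[x]/\id{x+1}\cong F_{p^m}$, a field. Hence $\id{p,x+1}$ is maximal.

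Next I would check that $\id{p,x+1}$ is nilpotent. Both generators are nilpotent: $p^{a}=0$ in $\R$ because $GR(p^a,m)$ has characteristic $p^a$, and $(x+1)^{p^s a}=0$ by Proposition \ref{prop.nilp}. Since $\R$ is commutative, expanding $\left(rp+s(x+1)\right)^{N}$ (with $r,s\in\R$ arbitrary) by the multinomial theorem shows that every monomial vanishes once $N$ is large enough that each term carries $p$ to a power at least $a$ or $(x+1)$ to a power at least $p^s a$; for instance $N=p^s a+a-1$ works, so $\id{p,x+1}^{N}=0$. A nilpotent ideal is contained in the Jacobson radical, so $\id{p,x+1}\subseteq J\left(\R\right)$.

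Finally I would combine the pieces. Because $J\left(\R\right)$ is the intersection of the maximal ideals of $\R$ and $\id{p,x+1}$ is one of them, $J\left(\R\right)\subseteq\id{p,x+1}$; with the reverse inclusion this yields $J\left(\R\right)=\id{p,x+1}$. Any maximal ideal of $\R$ then contains $J\left(\R\right)=\id{p,x+1}$ and therefore, by maximality of the latter, equals it --- so $\R$ has a unique maximal ideal and is local with radical $\id{p,x+1}$. I do not anticipate a genuine obstacle here; the step that needs care is the quotient computation, where the oddness of $p$ is exactly what makes $x\equiv-1$ compatible with the relation $x^{p^s}\equiv-1$.
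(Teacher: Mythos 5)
Your proof is correct, but it takes a genuinely different route from the paper's. The paper argues by directly characterizing the non-units: writing $f=\sum_i a_i(x+1)^i$, it uses the nilpotence of $x+1$ (Proposition \ref{prop.nilp}) to see that $f$ is a unit iff $a_0$ is, and the $p$-adic expansion of $a_0$ to see that this happens iff $p\nmid a_0$; hence the set of non-units is exactly $\id{p,x+1}$, which is therefore the unique maximal ideal. You instead show $\id{p,x+1}$ is maximal by computing the residue ring --- using that $p$ is odd so $x^{p^s}+1$ lies in $\id{p,x+1}$ and the quotient collapses to $F_{p^m}$ --- and separately that it is nilpotent (both generators are nilpotent and the ring is commutative), hence contained in $J$, which forces equality and uniqueness. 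Both arguments lean on Proposition \ref{prop.nilp}. What your route buys is modularity and the explicit identification of the residue field as $F_{p^m}$; what the paper's route buys is an explicit unit criterion ($f$ is invertible iff the constant term of its $(x+1)$-expansion is a unit of $GR(p^a,m)$), which is reused implicitly elsewhere (e.g.\ in the canonical-form discussion before the second proof of Lemma \ref{lemma.agen}). One small point worth making explicit in your write-up is that the quotient is nonzero, so $\id{p,x+1}$ is indeed a proper ideal; this is immediate from the isomorphism with $F_{p^m}$, but it is the step where properness is actually certified.
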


\begin{proof}
Let $I$ be the set of non-invertible elements.
Let $f\in\id{p,x+1}$. By Proposition \ref{prop.nilp} $(x+1)$ is
nilpotent. Since $p$ is also, $f$ is nilpotent and hence not
invertible. So,
$\id{p,x+1}\subset I$. Now, let $f\in I$. We can write
$f(x)=\sum_{i=0}^{k}a_i(x+1)^i$ where $a_i\in GR(p^a,m)$. So, $f$ is
invertible if and only if $a_0$ is invertible. The $p$-adic expansion
$a_0=\sum_{i=0}^{a-1}b_ip^i$ where
$b_i\in\{0,1,\zeta,\zeta^2,\dots,\zeta^{p^{m-2}}\}$ assures that $a_0$ is
invertible if and only if $b_0\neq 0$. The assumption that $f$ not invertible implies
therefore, that $p\mid a_0$ and this shows that $f\in \id{p,x+1}$.
So, $I\subset\id{p,x+1}$. Since $I$ contains all invertible
elements, it is the unique maximal ideal and therefore, $\R$ is
local.
\end{proof}

\begin{proposition}
\label{prop.soc} The socle $soc\left(\R\right)$ of $\R$
is the simple module $\id{p^{a-1}(x+1)^{(p^s-1)}}$.
\end{proposition}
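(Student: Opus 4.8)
The plan is to prove the two inclusions $\id{p^{a-1}(x+1)^{p^s-1}}\subseteq soc(\R)$ and $soc(\R)\subseteq\id{p^{a-1}(x+1)^{p^s-1}}$, and along the way that the ideal on the left is simple. Write $u=p^{a-1}(x+1)^{p^s-1}$ and $J=J(\R)=\id{p,x+1}$ (Proposition \ref{prop.max}), and recall that $\R$ is free of rank $p^s$ over $GR(p^a,m)$ with basis $1,(x+1),\dots,(x+1)^{p^s-1}$. For the first inclusion I would first check $Ju=0$: clearly $pu=p^a(x+1)^{p^s-1}=0$, and by the computation in the proof of Proposition \ref{prop.nilp}, $(x+1)u=p^{a-1}(x+1)^{p^s}=p^a\alpha(x)=0$. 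Hence $\id{u}=\R u$ is a vector space over the field $\R/J$ spanned by the single element $u$, so it has dimension at most $1$ over $\R/J$; and since $p^{a-1}\ne 0$ in $GR(p^a,m)$ and the $(x+1)^i$ form a free basis, $u\ne 0$. Thus $\id{u}$ is a simple $\R$-module, and in particular $\id{u}\subseteq soc(\R)$.

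For the reverse inclusion I would show that the annihilator $N=\{r\in\R:\ pr=(x+1)r=0\}$ of $J$ equals $\id{u}$; this suffices because $\R$ is local, so every minimal ideal, being simple, is killed by $J$, whence $soc(\R)\subseteq N$. The inclusion $\id{u}\subseteq N$ was checked above. For $N\subseteq\id{u}$, expand $r\in N$ as $r=\sum_{i=0}^{p^s-1}c_i(x+1)^i$ with $c_i\in GR(p^a,m)$. From $pr=0$ and freeness one gets $pc_i=0$, so each $c_i$ lies in $p^{a-1}GR(p^a,m)$, the annihilator of $p$ in $GR(p^a,m)$; in particular $c_{p^s-1}(x+1)^{p^s}=p\,c_{p^s-1}\,\alpha(x)=0$, so the top term drops out of $(x+1)r=\sum_{i=0}^{p^s-1}c_i(x+1)^{i+1}$, leaving $\sum_{i=0}^{p^s-2}c_i(x+1)^{i+1}=0$. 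Since $(x+1)^1,\dots,(x+1)^{p^s-1}$ are part of the free basis, this forces $c_0=\dots=c_{p^s-2}=0$, so $r=c_{p^s-1}(x+1)^{p^s-1}$ with $c_{p^s-1}\in p^{a-1}GR(p^a,m)$, i.e.\ $r\in\id{u}$. Combining the two inclusions gives $soc(\R)=\id{p^{a-1}(x+1)^{p^s-1}}$, and it is simple.

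I expect the only real content to be the identification of $N$ in the second step, and its crux is the relation $(x+1)^{p^s}=p\alpha(x)$ from Proposition \ref{prop.nilp}: this makes the term $(x+1)^{p^s}$ vanish when multiplied by a $p$-torsion coefficient, which is exactly why a socle element can have a nonzero component only in the top $(x+1)$-coordinate. Everything else is bookkeeping with the free basis $\{(x+1)^i\}_{i<p^s}$ of $\R$ over $GR(p^a,m)$. As a back-up argument I would keep the reduction modulo $p$: by Lemma \ref{lemma.hai} the ring $\frac{GR(p,m)[x]}{\id{x^{p^s}+1}}$ is a chain ring whose socle is the top power $\id{(x+1)^{p^s-1}}$ of its uniformizer, and transporting this through the isomorphism $p^{a-1}\R\cong\R/p\R\cong\frac{GR(p,m)[x]}{\id{x^{p^s}+1}}$ of Lemma \ref{lemma.iso} gives the same answer; but the direct computation avoids invoking the chain-ring structure and is a little shorter.
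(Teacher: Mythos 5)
Your proof is correct, and its skeleton is the same as the paper's: identify $soc(\R)$ with the annihilator of $J(\R)=\id{p,x+1}$ and compute that annihilator. The difference is in how the two halves are justified. The paper handles the inclusion $\id{p^{a-1}(x+1)^{p^s-1}}\subseteq soc(\R)$ and the simplicity claim by passing through Lemmas \ref{lemma.iso} and \ref{lemma.hai} (the chain-ring structure of the mod-$p$ reduction), and states the reverse inclusion ``$pa(x)=0$ and $(x+1)a(x)=0$ so $a(x)\in\id{p^{a-1}(x+1)^{p^s-1}}$'' without detail. You instead argue entirely inside $\R$ with the free $GR(p^a,m)$-basis $1,(x+1),\dots,(x+1)^{p^s-1}$: the verification $Ju=0$ with $u\neq 0$ gives both the forward inclusion and simplicity at once (a cyclic nonzero module over the field $\R/J$), and your coefficientwise computation --- $pc_i=0$ forces $c_i\in p^{a-1}GR(p^a,m)$, the top term $c_{p^s-1}(x+1)^{p^s}=pc_{p^s-1}\alpha(x)$ vanishes, and freeness kills $c_0,\dots,c_{p^s-2}$ --- is exactly the argument the paper elides. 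What your route buys is self-containedness (no appeal to the classification of ideals of $\frac{GR(p,m)[x]}{\id{x^{p^s}+1}}$) and an explicit proof of the step the paper leaves to the reader; what the paper's route buys is brevity and consistency with the module-theoretic viewpoint ($p^{a-1}\R$ as a uniserial subambient) that it exploits later for the distance computations. Your backup argument via Lemmas \ref{lemma.iso} and \ref{lemma.hai} is essentially the paper's own.
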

\begin{proof}
Using Lemma \ref{lemma.hai}, it can be shown that
$\id{p^{a-1}(x+1)^{(p^s-1)}}\subset soc\left(\R\right)$. Let
$a(x)\in soc\left(\R\right)$. Since $J(\R)=\id{p,x+1}$, $pa(x)=0$
and $(x+1)a(x)=0$ so $a(x)\in \id{p^{a-1}(x+1)^{p^s-1}}$. Hence,
$soc\left(\R\right)= \id{p^{a-1}(x+1)^{(p^s-1)}}$.

By Lemmas \ref{lemma.iso} and \ref{lemma.hai}, it is clear that
$soc\left(\R\right)$ is simple.
\end{proof}

\begin{proposition}
\label{lemma.notchain} In the ambient ring $\frac{GR(p^a,m)[x]}{\id{x^{p^s}+1}}$
\begin{enumerate}
\item $p\notin{\id{x+1}}$
\item $x+1\notin{\id{p}}$
\item $\R$ is not a chain ring
\item $\id{p,x+1}$ is not a principal ideal
\end{enumerate}
\end{proposition}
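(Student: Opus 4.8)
The plan is to prove the four items essentially in order, with (1) and (2) as the technical core and (3), (4) following quickly from them together with the structure results already established. Throughout I work in $R = \R$ with $p$ an odd prime, $a>1$, $s>0$, and I freely use that $R$ is local with $J(R) = \id{p, x+1}$ (Proposition \ref{prop.max}) and that $(x+1)$ is nilpotent with $(x+1)^{p^s} = p\,\alpha(x)$ for some $\alpha(x)\in R$ (Proposition \ref{prop.nilp} and its proof).

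\smallskip

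\textbf{Item (1): $p\notin\id{x+1}$.} Suppose for contradiction that $p = (x+1)g(x)$ for some $g(x)\in R$. Reducing modulo $p$ gives $0 = (x+1)\bar g(x)$ in $\frac{GR(p,m)[x]}{\id{x^{p^s}+1}}$, which by Lemma \ref{lemma.hai} is a chain ring with nilpotency index $p^s$ for $x+1$; so this alone is no contradiction. Instead I will track the $p$-adic level more carefully. Write $g(x) = \sum_i a_i (x+1)^i$ with $a_i\in GR(p^a,m)$, so $(x+1)g(x) = \sum_i a_i (x+1)^{i+1}$ has no $(x+1)^0$-term in its $(x+1)$-adic expansion once we reduce exponents $\geq p^s$ using $(x+1)^{p^s} = p\alpha(x)$. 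The point is that every occurrence of a "constant term" $p^j$ (a multiple of $x+1$ to the zero power) produced on the right-hand side must come from reducing some $(x+1)^{p^s}$, hence carries an extra factor of $p$; iterating, any constant term of $(x+1)g(x)$ lies in $\id{p^2}$ when $a\geq 2$ — but $p\notin\id{p^2}$ in $GR(p^a,m)$. The cleanest way to make this rigorous is to use Lemma \ref{lemma.ideal}: the principal ideal $\id{x+1}$ has a Gröbner-type generating description, and condition (5) there forces $p^{j_0}(x^{p^s}+1)\in\id{p^{j_0}(x+1)}$ in $GR(p^a,m)[x]$ with $f_0 = x+1$, $j_0 = 0$; comparing $(x+1)$-adic valuations (the order of vanishing at $x=-1$ over the residue field $F_{p^m}$) of $p$ versus elements of $\id{x+1}$ gives the contradiction, since every element of $\id{x+1}$ either is divisible by $p$ or has positive order at $x=-1$, while $p$ has neither property.

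\smallskip

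\textbf{Item (2): $x+1\notin\id{p}$.} This is the easier direction. If $x+1 = p\,h(x)$ for some $h(x)\in R$, then $x+1\in\id{p}$, so every power $(x+1)^k\in\id{p^k}$. Taking $k = p^s - 1$ gives $(x+1)^{p^s-1}\in\id{p^{p^s-1}}$, hence $p^{a-1}(x+1)^{p^s-1}\in\id{p^{\,a-1+p^s-1}}$. Since $s>0$ we have $p^s - 1\geq 1$, so $a - 1 + p^s - 1 \geq a$, whence $p^{a-1}(x+1)^{p^s-1} = 0$ in $GR(p^a,m)$-coefficients — contradicting Proposition \ref{prop.soc}, which says $p^{a-1}(x+1)^{p^s-1}$ generates the (nonzero, simple) socle. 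Alternatively and even more directly: reduction mod $p$ sends $x+1$ to the nonzero nilpotent $\overline{x+1}$ in the chain ring of Lemma \ref{lemma.hai}, but sends every element of $\id p$ to $0$.

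\smallskip

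\textbf{Items (3) and (4).} These follow formally. For (4): if $J(R) = \id{p,x+1}$ were principal, say $\id{p,x+1} = \id{d}$, then since $R$ is local this $d$ would be a generator of the maximal ideal, and $R$ would be a chain ring by Lemma \ref{lemma.chain}; but then the ideals $\id p$ and $\id{x+1}$ would be comparable, contradicting (1) and (2). More directly: $p, x+1\in\id d$ forces $d\in J(R)$, and one shows $\id d \subseteq \id p + \id{x+1}\cdot(\text{something})$ leads back to one of $p\in\id{x+1}$ or $x+1\in\id p$ after accounting for units — the slick route is just to invoke Lemma \ref{lemma.chain}. For (3): by Lemma \ref{lemma.chain} a finite local ring is a chain ring iff its maximal ideal is principal, so (4) $\Rightarrow$ (3); equivalently, a chain ring has linearly ordered ideals, but $\id p$ and $\id{x+1}$ are incomparable by (1) and (2).

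\smallskip

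\textbf{The main obstacle} is item (1). Items (2), (3), (4) are short once the structural propositions are in hand, but showing $p\notin\id{x+1}$ requires genuinely controlling how the relation $(x+1)^{p^s} = p\alpha(x)$ feeds "constant terms" back into the $(x+1)$-adic expansion, and arguing that those always arrive with an extra factor of $p$. I expect the role of the hypothesis $a>1$ to enter here precisely: when $a = 1$ the ring is the chain ring of Lemma \ref{lemma.hai} and $p = 0\in\id{x+1}$ trivially, so any correct proof of (1) must use $a\geq 2$, and the natural place is exactly in the step "a constant term of an element of $\id{x+1}$ lies in $\id{p^2}$, but $p\notin\id{p^2}$." Making that valuation bookkeeping airtight — ideally by citing Lemma \ref{lemma.ideal}(5) rather than re-deriving it — is where the real work lies.
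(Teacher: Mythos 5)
Your items (2), (3) and (4) are fine: for (2), the reduction-mod-$p$ argument (the image of $x+1$ in $\R/\id{p}\cong\frac{GR(p,m)[x]}{\id{x^{p^s}+1}}$ is nonzero, while everything in $\id{p}$ maps to zero) is actually cleaner than the paper's coefficient-comparison, and (3), (4) follow from (1), (2) together with Lemma \ref{lemma.chain} exactly as in the paper. The genuine gap is item (1), which you yourself flag as the place where ``the real work lies'' and then do not finish. Neither of the routes you sketch closes it: the assertion that ``every element of $\id{x+1}$ either is divisible by $p$ or has positive order at $x=-1$, while $p$ has neither property'' is false on its face, since $p$ is divisible by $p$; measuring the order of vanishing at $x=-1$ \emph{over the residue field} $F_{p^m}$ cannot work, because $p$ itself maps to $0$ there, which is exactly the distinction you need to preserve; and Lemma \ref{lemma.ideal}(5) is an existence statement about generating sets of a given ideal and does not by itself yield $p\notin\id{x+1}$. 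The bookkeeping about constant terms acquiring an extra factor of $p$ is a reasonable instinct but is left as a sketch.

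The missing idea is to evaluate at $x=-1$ over $GR(p^a,m)$ itself rather than over the residue field. Since $p$ is odd, $(-1)^{p^s}+1=0$, so the evaluation $f\mapsto f(-1)$ kills $x^{p^s}+1$ and descends to a ring homomorphism $\R\to GR(p^a,m)$ under which every element of $\id{x+1}$ goes to $0$. Thus if $p=(x+1)f(x)+(x^{p^s}+1)g(x)$ in $GR(p^a,m)[x]$, setting $x=-1$ gives $p=0$ in $GR(p^a,m)$, which is false precisely because $a>1$. This one-line argument is the paper's proof of (1), and it is exactly where the hypotheses that $p$ is odd and $a>1$ enter.
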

\begin{proof} Assume $p\in\id{x+1}$. So, $p=(x+1)f(x)+(x^{p^s}+1)g(x)$ in
$GR(p^a,m)[x]$. When $x=-1$, $p=0$ which is a contradiction in this
case since $a\neq 1$. Hence, $p\notin{\id{x+1}}$.

Now assume $x+1\in{\id{p}}$. Then $x+1=pf(x)+(x^{p^s}+1)g(x)$. Now,
comparing coefficients, $1=pf_0+g_0$ which implies $1\equiv g_0$
modulo $p$. Also, $0=pf_{p^s}+g_0+g_{p^s}$ which implies $g_0\equiv
-g_{p^s}$ modulo $p$. In general,
$0=pf_{kp^s}+g_{(k-1)p^s}+g_{kp^s}$ for $k\geq 1$. So, $g_{kp^s}\neq
0$ for $k\geq 0$ which is a contradiction since $g$ is a polynomial.
Hence, $x+1\notin{\id{p}}$.

Finally, since $0\neq{\id{p}}\nsubseteq{\id{x+1}}$ and
$0\neq{\id{x+1}}\nsubseteq{\id{p}}$, $\R$ is not a chain ring. Since
any local ring with principal maximal ideal is a chain ring by Lemma
\ref{lemma.chain}, $\R$ cannot have principal maximal ideal. Hence,
$\id{p,x+1}$ is 2-generated.
\end{proof}

\begin{theorem}
\label{theo.notc} The ambient ring $\R$ is a finite local ring with
simple socle but not a chain ring.
\end{theorem}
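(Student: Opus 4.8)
The plan is to assemble this statement directly from the structural propositions already established in this section, since each of the four assertions has been isolated and proved separately above. First I would observe that $\R$ is finite: it is the quotient of $GR(p^a,m)[x]$ by the ideal generated by the monic polynomial $x^{p^s}+1$, so as a $GR(p^a,m)$-module it is free of rank $p^s$, hence has exactly $(p^{am})^{p^s}$ elements. This part is immediate and needs no real argument.

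Next I would invoke Proposition \ref{prop.max}, which already shows that $\R$ is a local ring with maximal ideal $J(\R)=\id{p,x+1}$; that takes care of the "local" clause. For the "simple socle" clause I would cite Proposition \ref{prop.soc}, which identifies $soc(\R)=\id{p^{a-1}(x+1)^{p^s-1}}$ and shows via Lemmas \ref{lemma.iso} and \ref{lemma.hai} that this ideal is simple. Finally, for "not a chain ring" I would quote part (3) of Proposition \ref{lemma.notchain}, where it is shown that $\id{p}$ and $\id{x+1}$ are incomparable nonzero ideals, so the ideal lattice is not a chain.

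Since every ingredient is already in hand, there is no genuine obstacle here; the only care needed is to make sure the hypotheses ($p$ odd, $a>1$, $s>0$) under which those propositions were proved are in force, which they are throughout this section. The proof is therefore just a one-line citation of Propositions \ref{prop.max}, \ref{prop.soc}, and \ref{lemma.notchain}, together with the finiteness remark; the substance of the section lies in those propositions rather than in this summarizing theorem. If anything, I would use the write-up as an opportunity to flag that this local-but-not-chain behavior is exactly what forces the later sections to pass to uniserial subambients rather than working with principal ideals directly.
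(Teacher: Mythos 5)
Your proposal matches the paper's proof exactly: the theorem is proved there by a one-line citation of Propositions \ref{prop.max}, \ref{prop.soc}, and \ref{lemma.notchain}, just as you describe. Your added remark on finiteness (free of rank $p^s$ over $GR(p^a,m)$) is a harmless and correct supplement that the paper leaves implicit.
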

\begin{proof}
Result of Propositions \ref{prop.max}, \ref{prop.soc} and
\ref{lemma.notchain}.
\end{proof}

\begin{example} To illustrate Theorem \ref{theo.notc}, we provide the following figure.
It shows the ideal lattice of $\frac{Z_{3^2}[x]}{\id{x^{3}+1}}$.
Notice that the radical is $\id{3,x+1}$ and the socle is
$\id{3(x+1)^2}$. More importantly, we see that the ring is not a
chain ring. \bigskip

\figa
\end{example}

Now we are ready to develop our main structural Lemma.

\begin{lemma}
\label{lemma.main} In $\frac{GR(p^a,m)[x]}{\id{x^{p^s}+1}}$ for
$t\geq 0$,
\[
(x+1)^{p^s+t(p-1)p^{s-1}}=p^{t+1}b_t(x)(x+1)^{p^{s-1}}+a_t(x)
\]
where $b_t(x)$ is invertible and $p^{t+2}|a_t(x)$.
\end{lemma}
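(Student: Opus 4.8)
The plan is to prove the identity by induction on $t$, using the formula for $(x+1)^{p^s}$ established in the proof of Proposition \ref{prop.nilp} as the engine. Recall from that computation that $(x+1)^{p^s} = p\alpha(x)$ in $\frac{GR(p^a,m)[x]}{\id{x^{p^s}+1}}$, where $\alpha(x)$ collects the binomial-coefficient terms $\sum_{k=1}^{p^s-1}\tfrac{1}{p}\nck{p^s}{k}x^k$. The base case $t=0$ asks us to write $(x+1)^{p^s}$ in the form $p\,b_0(x)(x+1)^{p^{s-1}} + a_0(x)$ with $b_0$ invertible and $p^2 \mid a_0(x)$. For this I would use Lemma \ref{lemma.pd}: for $1 \le k \le p^s/2$, if $p^l \| k$ then $p^{s-l} \mid \nck{p^s}{k}$, so $p \mid \tfrac{1}{p}\nck{p^s}{k}$ unless $s - l \le 1$, i.e. unless $l \ge s-1$, i.e. unless $k = p^{s-1}$ (the only such $k$ in range, since $k \le p^s/2 < p^s$ forces $l = s-1$ and then $k = p^{s-1}$). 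A symmetric argument handles $k > p^s/2$ via $\nck{p^s}{k} = \nck{p^s}{p^s-k}$. Hence modulo $p^2$ the polynomial $p\alpha(x)$ reduces to $p\cdot c\cdot(x+1)^{p^{s-1}}$ where $c = \tfrac{1}{p}\nck{p^s}{p^{s-1}}$, and Lemma \ref{lemma.pd} (with $l = s-1$) shows $p^{s-(s-1)} = p \| \nck{p^s}{p^{s-1}}$, so $c$ is a unit; set $b_0(x) = c$ and let $a_0(x)$ be the remaining terms, which are divisible by $p^2$ by construction.

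For the inductive step, assume the identity holds at level $t$:
\[
(x+1)^{p^s + t(p-1)p^{s-1}} = p^{t+1} b_t(x)(x+1)^{p^{s-1}} + a_t(x), \qquad p^{t+2} \mid a_t(x).
\]
Multiply both sides by $(x+1)^{(p-1)p^{s-1}}$ to get
\[
(x+1)^{p^s + (t+1)(p-1)p^{s-1}} = p^{t+1} b_t(x)(x+1)^{p^{s-1} + (p-1)p^{s-1}} + a_t(x)(x+1)^{(p-1)p^{s-1}}.
\]
The exponent in the first term on the right is $p^{s-1} + (p-1)p^{s-1} = p^s$, so that term is $p^{t+1} b_t(x)(x+1)^{p^s} = p^{t+1}b_t(x)\cdot p\alpha(x) = p^{t+2}b_t(x)\alpha(x)$. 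Now feed the base-case expansion $(x+1)^{p^s} = p b_0(x)(x+1)^{p^{s-1}} + a_0(x)$ back in: the first term becomes $p^{t+2} b_t(x) b_0(x)(x+1)^{p^{s-1}} + p^{t+1}b_t(x)a_0(x)$, where $p^{t+1}b_t(x)a_0(x)$ is divisible by $p^{t+3}$ since $p^2 \mid a_0(x)$. Setting $b_{t+1}(x) = b_t(x)b_0(x)$ — a product of units, hence a unit — and collecting everything else into $a_{t+1}(x) = p^{t+1}b_t(x)a_0(x) + a_t(x)(x+1)^{(p-1)p^{s-1}}$, both summands are divisible by $p^{t+3}$ (the first as just noted, the second because $p^{t+2} \mid a_t(x)$), so $p^{t+3} \mid a_{t+1}(x)$, completing the induction.

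The main obstacle is the base case, specifically pinning down that $k = p^{s-1}$ is the unique index in the range $1 \le k \le p^s - 1$ for which $\tfrac{1}{p}\nck{p^s}{k}$ is a unit, and that it genuinely is a unit there. Everything downstream is bookkeeping with powers of $p$. The care needed is: Lemma \ref{lemma.pd} is stated only for $k \le p^n/2$, so one must explicitly invoke the symmetry $\nck{p^s}{k} = \nck{p^s}{p^s - k}$ to cover the upper half of the range, and one must check that $p^{s-1}$ really does satisfy $k \le p^s/2$ (it does, as $p^{s-1} < p^s/2$ for $p \ge 3$, which is exactly the standing hypothesis that $p$ is odd). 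It is worth remarking in the proof that this is where oddness of $p$ enters — for $p = 2$ one would have $p^{s-1} = p^s/2$, which is the boundary case and also the reason the $p=2$ cyclic situation is treated separately in Section \ref{sect.struct2}.
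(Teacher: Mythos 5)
Your overall strategy is the same as the paper's (induction on $t$, base case from expanding $((x+1)-1)^{p^s}+1$ together with Lemma \ref{lemma.pd}, inductive step by multiplying by $(x+1)^{(p-1)p^{s-1}}$ and re-substituting the base case), but both halves of your argument contain a genuine gap. In the base case it is not true that $k=p^{s-1}$ is the only index with $p^2\nmid\nck{p^s}{k}$: every $k=jp^{s-1}$ with $1\le j\le p-1$ has $\nck{p^s}{k}$ divisible by $p$ but not by $p^2$ (for $j\le\frac{p-1}{2}$ this is Lemma \ref{lemma.pd} with $l=s-1$, and the remaining $j$ follow from the very symmetry $\nck{p^s}{k}=\nck{p^s}{p^s-k}$ that you invoke --- which therefore produces \emph{more} surviving terms, not fewer). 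So $p-1\ge 2$ terms survive modulo $p^2$, and your $a_0(x)$, defined as everything other than $c(x+1)^{p^{s-1}}$, is not divisible by $p^2$. Concretely, in $\frac{Z_9[x]}{\id{x^3+1}}$ one has $(x+1)^3=3(x+1)^2-3(x+1)$, and $3(x+1)^2-3(x+1)-3\cdot 1\cdot(x+1)$ is not divisible by $9$. The repair, which is what the paper does, is to take $b_0(x)$ to be the polynomial $\sum_{j=1}^{p-1}\pm\frac1p\nck{p^s}{jp^{s-1}}(x+1)^{(j-1)p^{s-1}}$ rather than a constant; it is still invertible because its $(x+1)^0$-coefficient is a unit while $x+1$ lies in the radical of this local ring. (Note also that Lemma \ref{lemma.pd} only gives $p\mid\nck{p^s}{p^{s-1}}$, not exact divisibility; that the quotient is a unit needs a separate word.)

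In the inductive step you set $a_{t+1}=p^{t+1}b_ta_0+a_t(x)(x+1)^{(p-1)p^{s-1}}$ and assert both summands are divisible by $p^{t+3}$, but the hypothesis $p^{t+2}\mid a_t$ yields only $p^{t+2}$ for the second summand, and nothing supplies the missing factor of $p$ (already at $t=0$ the term $a_0(x+1)^{(p-1)p^{s-1}}$ need only be divisible by $p^2$, while you need $p^3$). That term cannot be placed in $a_{t+1}$ at all. The paper's way out is to observe that it is a multiple of $(x+1)^{p^{s-1}}$: write
\[
a_t(x)(x+1)^{(p-1)p^{s-1}}=p^{t+2}\left[\frac{a_t(x)}{p^{t+2}}(x+1)^{(p-2)p^{s-1}}\right](x+1)^{p^{s-1}}
\]
and absorb the bracket into $b_{t+1}$, which stays invertible because the perturbation is a multiple of $x+1$ (here $(p-2)p^{s-1}\ge 1$ uses $p\ge 3$) and hence lies in the radical; then $a_{t+1}=p^{t+1}b_ta_0$ alone, which is divisible by $p^{t+3}$ as required. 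Both fixes are routine, but as written neither the base case nor the inductive step is correct.
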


\begin{proof} We proceed by induction on $t$. For $t=0$,
\[
\begin{array}{lcl}
0=x^{p^s}+1&=&((x+1)-1)^{p^s}+1\\
&=&(x+1)^{p^s}-\nck{p^s}{{p^s}-1}(x+1)^{{p^s}-1}+\nck{{p^s}}{{p^s}-2}(x+1)^{{p^s}-2}-\dots+\nck{{p^s}}{1}(x+1)\\
\end{array}
\]
By Lemma \ref{lemma.pd}
\[
\begin{array}{lcl}
(x+1)^{p^s}&=&\nck{p^s}{{p^s}-1}(x+1)^{{p^s}-1}-\nck{{p^s}}{{p^s}-2}(x+1)^{{p^s}-2}+\dots-\nck{{p^s}}{1}(x+1)\\
&=&\nck{p^s}{(p-1)p^{s-1}}(x+1)^{(p-1)p^{s-1}}+\dots-\nck{p^s}{p^{s-1}}(x+1)^{p^{s-1}}+a_0(x)\\
&=&pb_0(x)(x+1)^{p^{s-1}}+a_0(x)\\
\end{array}
\]
for some $a_0(x)$ s.t. $p^2|a_0(x)$ and $b_0(x)$ invertible.

Now assume the result holds for $t-1$. So there exists some
$a_{t-1}(x)$ s.t. $p^{t+1}|a_{t-1}(x)$ and $b_{t-1}(x)$ invertible
where
$(x+1)^{p^s+(t-1)(p-1)p^{s-1}}=p^tb_{t-1}(x)(x+1)^{p^{s-1}}+a_{t-1}(x)$.
So
\[
\begin{array}{lcl}
(x+1)^{p^s+t(p-1)p^{s-1}}&=&(x+1)^{p^s+(t-1)(p-1)p^{s-1}}(x+1)^{(p-1)p^{s-1}}\\
&=&\left[p^tb_{t-1}(x)(x+1)^{p^{s-1}}+a_{t-1}(x)\right](x+1)^{(p-1)p^{s-1}}\\
&=&p^tb_{t-1}(x)(x+1)^{p^s}+a_{t-1}(x)(x+1)^{(p-1)p^{s-1}}\\
&=&p^tb_{t-1}(x)\left[pb_0(x)(x+1)^{p^{s-1}}+a_0(x)\right]+a_{t-1}(x)(x+1)^{(p-1)p^{s-1}}\\
&=&p^{t+1}b_{t-1}(x)b_0(x)(x+1)^{p^{s-1}}+p^tb_{t-1}(x)a_0(x)+a_{t-1}(x)(x+1)^{(p-1)p^{s-1}}\\
&=&p^{t+1}b_{t-1}(x)\left[b_0(x)+\frac{a_{t-1}(x)}{p^{t+1}}(x+1)^{(p-2)p^{s-1}}\right](x+1)^{p^{s-1}}+p^tb_{t-1}(x)a_0(x)\\
&=&p^{t+1}b_{t}(x)(x+1)^{p^{s-1}}+a_t(x)\\
\end{array}
\]
\end{proof}

\begin{corollary}
\label{cor.nil} In $\R$, the nilpotency of $x+1$ is
$p^sa-p^{s-1}(a-1)$.
\end{corollary}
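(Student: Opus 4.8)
The plan is to compute the nilpotency of $x+1$ by bootstrapping from Lemma~\ref{lemma.main}. Proposition~\ref{prop.nilp} already tells us $x+1$ is nilpotent, so we only need to locate the exact power. The key identity is the one from Lemma~\ref{lemma.main}: for $t\geq 0$,
\[
(x+1)^{p^s+t(p-1)p^{s-1}}=p^{t+1}b_t(x)(x+1)^{p^{s-1}}+a_t(x),
\]
with $b_t$ invertible and $p^{t+2}\mid a_t(x)$. The idea is to run $t$ from $0$ up to $a-2$. At $t=a-2$ the identity reads
\[
(x+1)^{p^s+(a-2)(p-1)p^{s-1}}=p^{a-1}b_{a-2}(x)(x+1)^{p^{s-1}}+a_{a-2}(x),
\]
and since $p^a\mid a_{a-2}(x)$ we have $a_{a-2}(x)=0$ in $\R$, so
\[
(x+1)^{p^s+(a-2)(p-1)p^{s-1}}=p^{a-1}b_{a-2}(x)(x+1)^{p^{s-1}}.
\]
Now $b_{a-2}$ is a unit, so this element equals $p^{a-1}(x+1)^{p^{s-1}}$ up to a unit; multiplying by $(x+1)^{p^s-1-p^{s-1}}$ (a nonnegative exponent since $s\geq 1$) gives, up to a unit, $p^{a-1}(x+1)^{p^s-1}$, which by Proposition~\ref{prop.soc} is the (nonzero) generator of the socle. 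Hence $(x+1)^{p^s+(a-2)(p-1)p^{s-1}+(p^s-1-p^{s-1})}=(x+1)^{p^sa-p^{s-1}(a-1)-1}$ is nonzero. (One should double-check the arithmetic: $p^s+(a-2)(p-1)p^{s-1}+p^s-1-p^{s-1} = 2p^s - p^{s-1} - 1 + (a-2)(p-1)p^{s-1}$, and expanding $p^sa - p^{s-1}(a-1) - 1 = p^sa - p^{s-1}a + p^{s-1} - 1$; these agree, using $p^s = p\cdot p^{s-1}$, after collecting the $p^{s-1}$ terms.)

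For the upper bound, I multiply the displayed socle-identity once more by $(x+1)$: since $p^{a-1}(x+1)^{p^s-1}$ generates the simple socle and $x+1\in J(\R)$, we get $(x+1)\cdot p^{a-1}(x+1)^{p^s-1}=p^{a-1}(x+1)^{p^s}=0$ (this also follows directly from $(x+1)^{p^s}=p\alpha(x)$ in the proof of Proposition~\ref{prop.nilp}, giving $p^{a-1}(x+1)^{p^s}=p^a\alpha(x)=0$). Therefore $(x+1)^{p^sa-p^{s-1}(a-1)}=0$, and combined with the nonvanishing established above, the nilpotency index is exactly $p^sa - p^{s-1}(a-1)$.

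The one genuinely delicate point is ensuring that the element we land on is really nonzero rather than merely a nonzero multiple of something that could collapse; this is where Proposition~\ref{prop.soc} does the work, since it identifies $\id{p^{a-1}(x+1)^{p^s-1}}$ as the socle, which is a nonzero (simple) module, so its generator is nonzero. A secondary bookkeeping hazard is the index arithmetic — tracking the exponents through the chain of multiplications — but that is routine once the telescoping structure of Lemma~\ref{lemma.main} is in hand. I would present the argument as: (1) specialize Lemma~\ref{lemma.main} at $t=a-2$ and kill the error term; (2) multiply up to the socle generator to show nonvanishing at exponent $p^sa-p^{s-1}(a-1)-1$; (3) multiply once more (or invoke $p^{a-1}(x+1)^{p^s}=0$) to show vanishing at $p^sa-p^{s-1}(a-1)$.
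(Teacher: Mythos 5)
Your proposal is correct and follows essentially the same route as the paper's own proof: specialize Lemma~\ref{lemma.main} at $t=a-2$, kill the error term since $p^a\mid a_{a-2}(x)$, multiply by $(x+1)^{(p-1)p^{s-1}-1}$ to land on a unit multiple of the nonzero socle generator $p^{a-1}(x+1)^{p^s-1}$, and multiply once more to get zero. The only (welcome) addition is that you explicitly cite Proposition~\ref{prop.soc} to justify the nonvanishing step, which the paper leaves implicit.
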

\begin{proof}
By Lemma \ref{lemma.main},
\[
(x+1)^{p^s+(a-2)(p-1)p^{s-1}}=p^{a-1}b(x)(x+1)^{p^{s-1}}+a(x)
\]
for some $b(x)$ is invertible and $a(x)$ s.t. $p^{a}|a(x)$. So,
$a(x)=0$ and
\[
(x+1)^{p^s+(a-2)(p-1)p^{s-1}}=p^{a-1}b(x)(x+1)^{p^{s-1}}.
\]
So,
\[
(x+1)^{p^s+(a-2)(p-1)p^{s-1}}(x+1)^{(p-1)p^{s-1}-1}=p^{a-1}b(x)(x+1)^{p^s-1}
\]
meaning
\[
(x+1)^{p^s+(a-1)(p-1)p^{s-1}-1}=p^{a-1}b(x)(x+1)^{p^s-1}\neq 0.
\]
Finally,
\[
(x+1)^{p^s+(a-1)(p-1)p^{s-1}}=p^{a-1}b(x)(x+1)^{p^s}=0
\]
Hence the nilpotency of $x+1$ is
$p^s+(a-1)(p-1)p^{s-1}=p^sa-p^{s-1}(a-1)$.
\end{proof}

So far we have seen that $\R$ is not a chain ring and not even a
PIR. Although a description of a generating set for ideals would be
most desirable, we will settle for a bound on the number of
generators. We provide two proofs of this result. The first one has
a more theoretical flavor as it uses results from
\cite{mcdonald_1974} on polynomial rings over local rings. The
second one aims at establishing a simple algorithm for producing
such a generating set.

\begin{lemma}
\label{lemma.agen} In the ambient ring
$\frac{GR(p^a,m)[x]}{\id{x^{p^s}+1}}$, any ideal is generated by $a$
or fewer elements.
\end{lemma}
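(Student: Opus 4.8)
The plan is to use the $p$-adic filtration of the ambient ring. Write $\mathcal{R}=\R$ and $R=GR(p^a,m)$, and consider
\[
\mathcal{R}=p^0\mathcal{R}\supseteq p^1\mathcal{R}\supseteq\dots\supseteq p^{a-1}\mathcal{R}\supseteq p^a\mathcal{R}=0 .
\]
Because $x^{p^s}+1$ is monic, $\mathcal{R}$ is free of rank $p^s$ over the chain ring $R$, and from this one reads off that $\mathrm{ann}_{\mathcal{R}}(p^i)=p^{a-i}\mathcal{R}$. Hence, for each $0\leq i\leq a-1$, multiplication by $p^i$ induces an isomorphism of $\mathcal{R}$-modules $p^i\mathcal{R}/p^{i+1}\mathcal{R}\cong\mathcal{R}/p\mathcal{R}\cong\frac{GR(p,m)[x]}{\id{x^{p^s}+1}}$; the case $i=a-1$ is Lemma \ref{lemma.iso}, and the general case needs only the kind of facts about polynomial rings over local rings collected in \cite{mcdonald_1974}. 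By Lemma \ref{lemma.hai} each graded piece $p^i\mathcal{R}/p^{i+1}\mathcal{R}$ is a chain ring, in particular a principal ideal ring.

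Given an ideal $I\vartriangleleft\mathcal{R}$, I would then set, for $0\leq i\leq a-1$,
\[
I_i=\frac{(I\cap p^i\mathcal{R})+p^{i+1}\mathcal{R}}{p^{i+1}\mathcal{R}},
\]
which is an ideal of $p^i\mathcal{R}/p^{i+1}\mathcal{R}$ and therefore principal; choose $g_i\in I\cap p^i\mathcal{R}$ whose image generates $I_i$, taking $g_i=0$ when $I_i=0$. The claim is that $I=\id{g_0,\dots,g_{a-1}}$, which proves the lemma once the zero $g_i$ are discarded. The inclusion $\supseteq$ is clear. For $\subseteq$, take $f\in I$ and run a successive-approximation argument down the filtration: since the image of $f$ lies in $I_0$ there is $h_0\in\mathcal{R}$ with $f-h_0g_0\in I\cap p\mathcal{R}$; since the image of $f-h_0g_0$ lies in $I_1$ there is $h_1$ with $f-h_0g_0-h_1g_1\in I\cap p^2\mathcal{R}$; iterating, after $a$ steps the remainder lies in $p^a\mathcal{R}=0$, so $f=\sum_{i=0}^{a-1}h_ig_i$.

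The bookkeeping to watch is that each $g_i$ is picked \emph{inside} $I$ (a representative of a class in $I_i$, not merely any lift to $p^i\mathcal{R}$ of a generator), and that the remainder produced at stage $i$ genuinely lands in $I\cap p^i\mathcal{R}$ so that stage $i+1$ applies; both are forced by the definition of $I_i$, so the only substantive point is the isomorphism $p^i\mathcal{R}/p^{i+1}\mathcal{R}\cong\mathcal{R}/p\mathcal{R}$, whose engine is the freeness of $\mathcal{R}$ over $R$ — this I expect to be the main (though mild) obstacle. Two side remarks worth recording: one could instead simply invoke Lemma \ref{lemma.ideal}, whose exhibited generating set already has $r+1\leq a$ elements; and the argument above is readily made constructive, since dividing (Lemma \ref{lemma.div}) by the monic generators $(x+1)^j$ of the ideals of the chain ring $\frac{GR(p,m)[x]}{\id{x^{p^s}+1}}$ lets one compute the $h_i$ explicitly, which is the flavor of the algorithmic second proof.
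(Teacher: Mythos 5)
Your proof is correct, and it takes a genuinely different route from either of the paper's two arguments. The paper's first proof runs a division-algorithm/degree-reduction procedure on the regular parts $g_i$ of the generators (via Lemma \ref{lemma.div}), and its second proof performs an explicit Gaussian-elimination on the canonical form $\sum_j \beta_j p^j (x+1)^{i_j}\alpha_j(x)$, killing the $p^0$-components of all but one generator, then the $p^1$-components, and so on. You instead argue structurally along the $p$-adic filtration: the key points --- that $\mathcal{R}$ is $R$-free of rank $p^s$, hence $\mathrm{ann}_{\mathcal{R}}(p^i)=p^{a-i}\mathcal{R}$, hence $p^i\mathcal{R}/p^{i+1}\mathcal{R}\cong\mathcal{R}/p\mathcal{R}\cong\frac{GR(p,m)[x]}{\id{x^{p^s}+1}}$ as $\mathcal{R}$-modules, with the latter uniserial by Lemma \ref{lemma.hai} --- all check out, and the successive-approximation step terminating at $p^a\mathcal{R}=0$ is sound. (One pedantic caveat: for $i\geq 1$ the layer $p^i\mathcal{R}/p^{i+1}\mathcal{R}$ is not a unital ring, so ``chain ring'' should be read as ``uniserial $\mathcal{R}$-module with every submodule cyclic,'' which is what your module isomorphism delivers; also, choosing a single $g_i\in I\cap p^i\mathcal{R}$ whose class generates $I_i$ uses that in a finite chain ring one element of any generating set already generates --- true, but worth saying.) What the comparison buys: your argument is shorter, conceptually cleaner, and visibly generalizes to any finite ring carrying a nilpotent filtration with principal layers, whereas the paper's proofs are deliberately algorithmic --- the second one in particular is designed to exhibit a concrete reduction procedure (illustrated in the example that follows it), and the first produces generators in the normalized shape $\id{g_0,pg_1,\dots,p^{a-1}g_{a-1}}$ that feeds into the Gr\"obner-basis machinery of Lemma \ref{lemma.ideal}. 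Your closing observation that the bound also falls out of Lemma \ref{lemma.ideal} (since $r+1\leq a$ there) is accurate.
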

\begin{proof}[Proof \#1]
Let $I=\id{f_0,f_1,\dots,f_n}\subset GR(p^a,m)[x]$ where $f_i\in
GR(p^a,m)[x]$ and $deg(f_i)<p^s$. There exists a regular polynomial
$g_i\in GR(p^a,m)[x]$ where $f_i=p^{k_i}g_i$. Consider $f_a\neq f_b$
where $deg(f_a)\geq deg(f_b)$. If $k_a=k_b$, by using Lemma
\ref{lemma.div} on $g_a$ and $g_b$, $f_a\in \id{f_b,r}$ where
$r(x)=0$ or $deg(r)<deg(f_b)$. So, at this point
$I=\id{f_0,f_1,\dots,f_{a-1},r,f_{a+1},\dots,f_n}$. If
$p^{k_b+1}\nmid r(x)$, we can continue this process and replace
$f_j$ and then $r$ etc. until the remainder is divisible by
$p^{k_j+1}$. It is clear then using this process that it will
produce a generating set $I=\id{g_0,p^1g_1,\dots,p^{a-1}g_{a-1}}$
where either each $g_i=0$ or is a regular polynomial.
\end{proof}

For the second proof, a canonical form for the description of
polynomials is needed. It was shown earlier that any $f\in\R$ can be
written as
\[
f(x)=\sum_{i=0}^{p^s-1} \sum_{j=0}^{a-1} \zeta_{ij}p^j(x+1)^i
\]
where $\zeta_{ij}\in \{0,1,
\zeta,\zeta^2,\dots,\zeta^{p^m-2}\}=\Tm$. Now, we will show there is
a useful form which we call the canonical form. First, we can write
\[
f(x)=\beta_0p^{0}(x+1)^{i_0}\alpha_0(x)+\beta_1p^{1}(x+1)^{i_1}\alpha_1(x)+\dots+\beta_{a-1}p^{{a-1}}(x+1)^{i_{a-1}}\alpha_{a-1}(x)
\]
where $\beta_k\in \Tm$ and $\alpha_k(x)\in\R$ is invertible. Assume
for a moment that $i_{k_1}\leq i_{k_2}$ and $\beta_{k_1}\neq 0 \neq
\beta_{k_2}$ for some $k_1\neq k_2$. Then
\[
\beta_{k_1}p^{k_1}(x+1)^{i_{k_1}}\alpha_{k_1}+\beta_{k_2}p^{{k_2}}(x+1)^{i_{k_2}}\alpha_{k_2}=\beta_{k_1}p^{{k_1}}(x+1)^{i_{k_1}}(\alpha_{k_1}+\frac{\beta_{k_2}}{\beta_{k_1}}p^{{k_2}-{k_1}}(x+1)^{i_{k_2}-i_{k_2}}\alpha_{k_2}).
\]
Since
$\alpha_{k_1}+\frac{\beta_{k_2}}{\beta_{k_1}}p^{{k_2}-{k_1}}(x+1)^{i_{k_2}-i_{k_2}}\alpha_{k_2}$
is invertible, wlog we can assume $i_0>i_1>\dots>i_{a-1}$. We now
use this canonical form in the following proof.

\begin{proof}[Proof \#2]
Let $I=\id{g_0,g_1,\dots,g_n}$ where $g_j\in\R$. If $n\leq a-1$ we
are done so assume $n\geq a$. Let $f_j=g_j$. Viewing the $f_j$ in
canonical form, write
\[
f_j(x)=\beta_{j0}p^0(x+1)^{i_{j0}}\alpha_{j0}(x)+\dots+\beta_{j(a-1)}p^{a-1}(x+1)^{i_{j(a-1)}}\alpha_{j(a-1)}(x).
\]
If $\beta_{j0}\neq0$ for some $j$, we reorder the $f_j$ so that for
all $j$ where $\beta_{j0}\neq0$, $i_{00}\leq i_{j0}$. Then, let
\[
f_j^\prime(x)=f_j-\beta_{00}^{-1}\alpha_{00}^{-1}f_0(x)\beta_{j0}\alpha_{j0}(x)(x+1)^{i_{j0}-i_{00}}.
\]
If $\beta_{j0}=0$ for all $j$, let $f_j^\prime(x)=f_j$. In either
case, $f_j^\prime\in I$ and
$f_j\in\id{f_0,f_1^\prime,\dots,f_n^\prime}$. Note that
$\beta_{j0}^\prime=0$ for $j\geq1$. To avoid unnecessary
complication with the notation at this point we let
$f_j=f_j^\prime$. Next we do the same process but we leave $f_0$
alone. If $\beta_{j1}\neq0$ for some $j\geq1$, we reorder the $f_j$
so that for all $j\geq 1$ where $\beta_{j1}\neq0$, $i_{11}\leq
i_{j1}$. Then, let
$f_j^{\prime}(x)=f_j-\beta_{11}^{-1}{\alpha_{11}}^{-1}f_1(x)\beta_{j1}\alpha_{j1}(x)(x+1)^{i_{j1}-i_{11}}$.
If $\beta_{j1}=0$ for all $j\geq1$, let $f_j^{\prime}(x)=f_j$. In
either case, $f_j^{\prime}\in \id{f_0,f_1,\dots,f_n}$ and
$f_j\in\id{f_0,f_1,f_2^{\prime}\dots,f_n^{\prime}}$. Note that
$\beta_{j1}^\prime=0$ for $j\geq2$. Continuing this process $a-1$
steps, we end up with
\[
\begin{array}{lcl}
f_0(x)&=&\beta_{00}(x+1)^{i_{00}}\alpha_{00}(x)+\beta_{01}(x+1)^{i_{01}}\alpha_{01}(x)+\dots+\beta_{0(a-1)}(x+1)^{i_{0(a-1)}}\alpha_{0(a-1)}(x)\\
f_1(x)&=&\beta_{11}(x+1)^{i_{11}}\alpha_{11}(x)+\beta_{12}(x+1)^{i_{12}}\alpha_{12}(x)+\dots+\beta_{1(a-1)}(x+1)^{i_{1(a-1)}}\alpha_{1(a-1)}(x)\\
&\vdots&\\
f_{a-2}(x)&=&\beta_{(a-2)(a-2)}(x+1)^{i_{(a-2)(a-2)}}\alpha_{(a-2)(a-2)}(x)+\beta_{(a-2)(a-1)}(x+1)^{i_{(a-2)(a-1)}}\alpha_{(a-2)(a-1)}(x)\\
f_{a-1}(x)&=&\beta_{(a-1)(a-1)}(x+1)^{i_{(a-1)(a-1)}}\alpha_{(a-1)(a-1)}(x)\\
f_{a}(x)&=&0\\
f_{a+1}(x)&=&0\\
&\vdots&\\
f_{n}(x)&=&0\\
\end{array}
\]
Finally, we see $g_j\in\id{f_0,f_1,f_2,\dots,f_{a-1}}=I$.
\end{proof}

To illustrate the previous algorithmic proof, we provide the
following example.

\begin{example}
Let $f_1,f_2,f_3\in \frac{Z_{9}[x]}{x^{27}+1}$ s.t.
\begin{eqnarray*}
f_1(x)&=&(x+1)-3\\
f_2(x)&=&(x+1)^2+3(x+1)\\
f_3(x)&=&(x+1)^3+3(x+1)\\
\end{eqnarray*}
After applying the first step in the algorithm in the above proof we
have
\begin{eqnarray*}
f_2'(x)&=&f_2(x)-f_1(x)(x+1)=6(x+1)\\
f_3'(x)&=&f_2(x)-f_1(x)(x+1)^3=3(x+1)[1+(x+1)]\\
\end{eqnarray*}
Then one final step gives
\begin{eqnarray*}
f_3''(x)&=&f_3'(x)+f_2'(x)[1+(x+1)]=0\\
\end{eqnarray*}
So, $\id{f_1,f_2,f_3}=\id{f_1,f_2'}$.
\end{example}

We conclude this section by providing a method for finding the
Hamming distances of codes in $\R$ where $p$ is an odd prime and
$a>1$ is provided. We use the canonical definition of Hamming weight
for polynomial based codes i.e. given $\bar{c}(x) \in \R$ where we
consider $c(x)$ as the polynomial representative of degree less than
$p^s$ in the coset $\bar{c}(x)=c(x)+\id{x^{p^s}+1}$, the Hamming
weight of $\bar{c}(x)$, $w(\bar{c}(x))$, is the number of non-zero
coefficients of $c(x)$. The minimum distance $d$ is then defined in
the usual way.

\begin{remark}
\label{remark.isometry} It should be clear that the isomorphism in
Lemma \ref{lemma.iso} is an isometry when the Hamming weight in
$\frac{GR(p,m)[x]}{\id{x^{p^s}+1}}$ is defined similarly to the
weight in $\R$.
\end{remark}

Theorem 4.11 of \cite{dinh_2008} provides the distances for any code
in $\frac{GR(p,m)[x]}{\id{x^{p^s}+1}}$, which we include here.

\begin{lemma}[Theorems 4.11, \cite{dinh_2008}]
\label{lemma.hai2} In $\frac{GR(p,m)[x]}{\id{x^{p^s}+1}}$, for
$0\leq i\leq p^s$
\begin{equation*}
d(\id{(x+1)^i})=\left\{
\begin{array}{ll}
1&\textrm{ if } i=0,\\
\beta+2&\textrm{ if } \beta p^{s-1}+1\leq i\leq (\beta+1)p^{s-1} \textrm{ where } 0\leq \beta\leq p-2,\\
(t+1)p^k&\textrm{ if } p^s-p^{s-k}+(t-1)p^{s-k-1}+1\leq i\leq p^s-p^{s-k}+tp^{s-k-1},\\
&\textrm{ where } 1\leq t\leq p-1 \textrm{ and } 1\leq k\leq s-1, \\
0&\textrm{ if } i=p^s,\\
\end{array}
\right.
\end{equation*}
\end{lemma}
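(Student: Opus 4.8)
The plan is to recast the problem as one about ordinary polynomials, obtain the upper bounds from explicit codewords, and prove the matching lower bounds by an induction on $s$ built on the $p$-ary sectioning of a polynomial. The first step is to note that over the field $GR(p,m)=F_{p^m}$ one has $x^{p^s}+1=(x+1)^{p^s}$, so the degree-$<p^s$ coset representative $c(x)$ of any element of $\id{(x+1)^i}$ is literally divisible by $(x+1)^i$ in $F_{p^m}[x]$; hence $d(\id{(x+1)^i})$ is the least Hamming weight of a nonzero $f\in F_{p^m}[x]$ with $\deg f\le p^s-1$ and $(x+1)^i\mid f$. By Lucas' theorem, for $0\le e<p^s$ we have $w((x+1)^e)=\prod_{l=0}^{s-1}(d_l+1)$, where $d_0,\dots,d_{s-1}$ are the base-$p$ digits of $e$. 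I would then set $D(i)=\min\{\,w((x+1)^e):i\le e\le p^s-1\,\}$ and check, by a routine computation with base-$p$ digits, that $D(i)$ equals the piecewise expression in the statement --- the minimizing $e$ being the right endpoint $(\beta+1)p^{s-1}$ on the first range and $e=p^s-(p-t)p^{s-k-1}$ on the second --- with $D(0)=1$ (the constant polynomial) and $i=p^s$ corresponding to the zero ideal, of distance $0$.

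This already gives the upper bound $d(\id{(x+1)^i})\le D(i)$: the codeword $(x+1)^{e}$, with $e$ the right endpoint of the range containing $i$ (and $1$ for $i=0$), lies in $\id{(x+1)^i}$ and has weight exactly $D(i)$. For the lower bound I would bring in $V(w,s)=\max\{\,e\le p^s-1:w((x+1)^e)\le w\,\}$, which, again by inspecting base-$p$ digits, obeys
\[
V(w,s)=\max_{0\le d\le p-1}\Big(d+p\,V\big(\lfloor w/(d+1)\rfloor,\,s-1\big)\Big),\qquad V(w,1)=\min(w-1,\,p-1),
\]
and reduce everything to one claim: \emph{every nonzero $f\in F_{p^m}[x]$ with $\deg f<p^s$ satisfies $v(f)\le V(w(f),s)$}, where $v(f)$ denotes the multiplicity of $(x+1)$ in $f$. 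Indeed, a nonzero $c\in\id{(x+1)^i}$ then has $i\le v(c)\le V(w(c),s)$, so there is $e\in[i,p^s-1]$ with $w((x+1)^e)\le w(c)$, and hence $w(c)\ge w((x+1)^e)\ge D(i)$.

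I would prove the claim by induction on $s$. For $s=1$ every exponent in $f$ is $<p$, so expressing the vanishing of the Hasse derivatives $f^{[0]}(-1),\dots,f^{[v(f)-1]}(-1)$ as a homogeneous linear system in the nonzero coefficients of $f$ and rescaling rows and columns turns the coefficient matrix into a Vandermonde matrix in the distinct exponents; nonsingularity forces $w(f)\ge v(f)+1=V(w(f),1)$. For the inductive step I would write $f(x)=\sum_{r=0}^{p-1}x^r f_r(x^p)$ with $\deg f_r<p^{s-1}$, so that $w(f)=\sum_r w(f_r)$, substitute $x=-1+z$, and use the Frobenius identity $(-1+z)^p=-1+z^p$ to obtain $f(-1+z)=\sum_{r=0}^{p-1}(-1+z)^r\,f_r(-1+z^p)$. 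Grouping by the base-$p$ expansion of the exponent, the coefficient of $z^{l+pj}$ $(0\le l<p)$ is the $z^l$-coefficient of $\sum_r f_r^{[j]}(-1)(z-1)^r$; this polynomial vanishes for $j<j_0:=\min_{f_r\neq0}v(f_r)$ and equals $\sum_{r\in S}c_r(z-1)^r$ for $j=j_0$, where $S=\{r:v(f_r)=j_0\}$ and $c_r=f_r^{[j_0]}(-1)\neq0$, so $v(f)=l_0+p\,j_0$ with $l_0$ the order of vanishing at $z=0$ of $\sum_{r\in S}c_r(z-1)^r$. Since the exponents $r\in S$ lie in $[0,p-1]$, the $s=1$ argument gives $l_0\le|S|-1$, and applying the induction hypothesis to the $f_r$ together with the pigeonhole estimate $\min_{r\in S}w(f_r)\le\lfloor w(f)/|S|\rfloor\le\lfloor w(f)/(l_0+1)\rfloor$ yields $j_0\le V(\lfloor w(f)/(l_0+1)\rfloor,\,s-1)$. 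Therefore $v(f)=l_0+p\,j_0\le l_0+p\,V(\lfloor w(f)/(l_0+1)\rfloor,\,s-1)\le V(w(f),s)$ by the recursion; the case $f'=0$, i.e.\ $f=f_0(x^p)$, is the instance $S=\{0\}$, $l_0=0$. Together with the upper bound this gives $d(\id{(x+1)^i})=D(i)$.

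The step I expect to be the main obstacle is this inductive passage: extracting the identity $v(f)=l_0+p\,j_0$ cleanly from the substitution $x=-1+z$, and arranging the pigeonhole estimate so that it matches the recursion for $V$ with no loss. By contrast, the Vandermonde nonsingularity in the base case (valid precisely because the exponents are $<p$) and the identification of $\min_e w((x+1)^e)$ with the stated piecewise formula are essentially bookkeeping.
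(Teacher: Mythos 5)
The paper offers no proof of this lemma at all: it is imported verbatim as Theorem 4.11 of \cite{dinh_2008}, so there is no internal argument to compare yours against. Your blind proof is a genuinely self-contained alternative and, as far as I can check, it is correct. The reduction is right: since $x^{p^s}+1=(x+1)^{p^s}$ in characteristic $p$, the degree-$<p^s$ representatives of $\id{(x+1)^i}$ are exactly the polynomials of degree $<p^s$ divisible by $(x+1)^i$; the upper bound comes from the codeword $(x+1)^e$ with $e$ the right endpoint of the relevant range; and the entire lower bound rests on your key claim $v(f)\le V(w(f),s)$, i.e.\ that among nonzero polynomials of degree $<p^s$ with prescribed order of vanishing at $-1$ the minimum weight is attained by a pure power of $x+1$. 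Your induction establishing this is sound: the base case is the nonsingularity of the matrix $\bigl(\nck{r}{j}\bigr)$ for distinct $r\in[0,p-1]$ (a scaled Vandermonde, invertible because $j!$ and the differences $r-r'$ are units mod $p$), and the step correctly uses $(-1+z)^p=-1+z^p$, the disjointness of supports in the $p$-ary sectioning (so $w(f)=\sum_r w(f_r)$), and the identity $v(f)=l_0+pj_0$. Three points you should make explicit in a write-up: (i) the recursion for $V$ needs the convention $V(0,\cdot)=-\infty$, or a restriction to digits $d$ with $\lfloor w/(d+1)\rfloor\ge 1$ --- harmless here since $w(f)\ge|S|\ge l_0+1$; (ii) you silently use monotonicity of $V(\cdot,s-1)$ to pass from $\lfloor w(f)/|S|\rfloor$ to $\lfloor w(f)/(l_0+1)\rfloor$; (iii) the identification of $D(i)$ with the displayed piecewise formula, while elementary digit bookkeeping, is exactly where the specific shape of the answer comes from and is the part a reader would want written out. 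What your route buys is independence from \cite{dinh_2008} and an explicit link between these negacyclic codes and the generalized Reed--Muller / radical-power picture; what it costs is length, which is presumably why the paper simply cites the result.
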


Using one additional result from \cite{salagean_2006} the distances
of all codes in $\R$ can be found. Moreover, Lemma \ref{lemma.ideal}
is algorithm based, so using all these results, an algorithm exists
for finding the distances of these codes.

\begin{lemma}[adapted from Theorem 6.1 in \cite{salagean_2006}]
\label{lemma.sal} Given the set a set a generators for a code
$I\vartriangleleft\R$ as in Lemma \ref{lemma.ideal},
$d(I)=d(\id{p^{a-1}f_r})$.
\end{lemma}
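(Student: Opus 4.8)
The plan is to establish the two inequalities $d(I)\le d(\id{p^{a-1}f_r})$ and $d(I)\ge d(\id{p^{a-1}f_r})$ separately; this is essentially a reprise, in the present ambient, of the reasoning behind Theorem~6.1 of \cite{salagean_2006}, applied to the coefficient chain ring $GR(p^a,m)$, whose maximal ideal $\id{p}$ has nilpotency index $a$. The first inequality is the easy half: since $j_r\le a-1$ we have $p^{a-1}f_r=p^{a-1-j_r}(p^{j_r}f_r)\in I$, so $\id{p^{a-1}f_r}\subseteq I$, and a subcode has minimum distance at least that of the code that contains it.

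For the reverse inequality I would show that every nonzero $c\in I$ has $w(c)\ge d(\id{p^{a-1}f_r})$. Put $\bar R=\frac{GR(p,m)[x]}{\id{x^{p^s}+1}}$, which is a chain ring by Lemma~\ref{lemma.hai}, and for $0\le k\le a-1$ let $T_k$ be the reduction modulo $p$ of the ideal $(I:p^k)=\{r\in\R:p^kr\in I\}$, an ideal of $\bar R$. Given nonzero $c\in I$, let $k$ be the largest integer with $c\in p^k\R$ and write $c=p^ku$ with $u\notin p\R$; then $u\in(I:p^k)$, so its reduction $\bar u\in\bar R$ is nonzero and lies in $T_k$. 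Since $a-k\ge1$, every coefficient of $u$ killed by $p^k$ is already divisible by $p$, so $w(c)=w(p^ku)\ge w(\bar u)$. As $T_k$ is an ideal of the chain ring $\bar R$, we get $w(\bar u)\ge d(T_k)$, and since $(I:p^k)$ increases with $k$ the ideals $T_k$ form an increasing chain, hence $d(T_k)\ge d(T_{a-1})$. Therefore $w(c)\ge d(T_{a-1})$ for every nonzero $c$, i.e. $d(I)\ge d(T_{a-1})$.

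It remains to identify $T_{a-1}$, which is the heart of the matter: the claim is $T_{a-1}=\id{\bar f_r}$, equivalently $I\cap p^{a-1}\R=\id{p^{a-1}f_r}$. The inclusion $\supseteq$ is clear, since $\id{p^{a-1}f_r}$ lies in both $I$ and $p^{a-1}\R$. For $\subseteq$ one exploits the structure of the generators in Lemma~\ref{lemma.ideal}: each $f_i$ being monic, its image in $\bar R$ is nonzero with a well-determined $(x+1)$-valuation, and multiplying the membership in condition~(4) by $p^{a-1-j_{i+1}}$ collapses its right-hand side to $\id{p^{a-1}f_{i+1}}$, giving $p^{a-1}f_i\in\id{p^{a-1}f_{i+1}}$ for each $i$ and hence $\id{p^{a-1}f_0}\subseteq\cdots\subseteq\id{p^{a-1}f_r}$; thus in the bottom layer the lowest-degree generator $f_r$ absorbs all the others. (This step is exactly the chain-ring Gr\"obner basis computation of \cite{salagean_2006}, so one may instead simply quote Theorem~6.1 there, the generating set of Lemma~\ref{lemma.ideal} being a strong Gr\"obner basis.) Once this is done, the isomorphism $p^{a-1}\R\cong\bar R$ of Lemma~\ref{lemma.iso} — an isometry by Remark~\ref{remark.isometry} — carries $\id{p^{a-1}f_r}$ onto $\id{\bar f_r}$, so $d(T_{a-1})=d(\id{\bar f_r})=d(\id{p^{a-1}f_r})$ (computable via Lemma~\ref{lemma.hai2}); combining with the first paragraph yields $d(I)=d(\id{p^{a-1}f_r})$. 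The main obstacle is precisely the inclusion $I\cap p^{a-1}\R\subseteq\id{p^{a-1}f_r}$; everything else is the elementary observation that multiplication by $p$ does not increase Hamming weight, together with the already available chain-ring and isometry facts.
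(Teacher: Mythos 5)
The paper offers no argument at all for this lemma --- it is stated as an adaptation of Theorem 6.1 of \cite{salagean_2006} and the citation is the entire proof --- so your reconstruction already goes beyond what the paper does, and its overall architecture is the right one: the easy inclusion $\id{p^{a-1}f_r}\subseteq I$ (valid since $j_r\le a-1$), the reduction of an arbitrary nonzero $c=p^ku$ to the nonzero element $\bar u$ of $T_k$ with $w(c)\ge w(\bar u)$, the monotonicity $d(T_k)\ge d(T_{a-1})$, and the transfer through the isometry of Lemma \ref{lemma.iso} are all correct and complete. You also correctly isolate the crux, namely $I\cap p^{a-1}\R=\id{p^{a-1}f_r}$ (equivalently $T_{a-1}=\id{\bar f_r}$).

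The one genuine weakness is that your self-contained sketch of the inclusion $I\cap p^{a-1}\R\subseteq\id{p^{a-1}f_r}$ does not close the argument. Multiplying condition (4) of Lemma \ref{lemma.ideal} by $p^{a-1-j_{i+1}}$ does give $p^{a-1}f_i\in\id{p^{a-1}f_{i+1}}$ (the higher generators die because $j_{i+2}>j_{i+1}$), and hence $\id{p^{a-1}f_0,\dots,p^{a-1}f_r}=\id{p^{a-1}f_r}$; but that only controls $p^{a-1}I$, not $I\cap p^{a-1}\R$, and these differ in general (take $I=\id{p^{a-1}}$, where $p^{a-1}I=0$ while $I\cap p^{a-1}\R=I$). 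A general element of $I\cap p^{a-1}\R$ is a sum $\sum g_ip^{j_i}f_i$ whose individual summands need not lie in $p^{a-1}\R$, so you cannot simply factor out $p^{a-1}$ and invoke the collapsed chain. What is actually needed is the division/leading-term argument: using that the $f_i$ are monic with strictly decreasing degrees and strictly increasing $p$-exponents, one reduces $c\in I\cap p^{a-1}\R$ degree by degree against the $p^{a-1}f_i$, which is exactly the strong Gr\"obner basis property that the paper mentions after Lemma \ref{lemma.ideal} but does not prove. Your fallback of simply quoting Theorem 6.1 of \cite{salagean_2006} is therefore not optional decoration --- it is what carries this step, and it is precisely what the paper itself does.
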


\begin{remark}
\label{remark.dist} Given Remark \ref{remark.isometry} and Lemmas
\ref{lemma.hai2} and \ref{lemma.sal}, the distance of any code in
$\R$ can be determined. Let $I\vartriangleleft\R$. We can find
$f_1,\dots,f_r\in\R$ such that
$I=\id{p^{j_0}f_0,\dots,p^{j_r}f_{r}}$ where this set satisfies the
properties of Lemma \ref{lemma.ideal}. Then Lemma \ref{lemma.sal}
implies that $d(I)=d(\id{p^{a-1}f_r})$. Next we view $f_r$ in
canonical form. Write
\[
f(x)=\beta_0(x+1)^{i_0}\alpha_0(x)+\beta_1p^{1}(x+1)^{i_1}\alpha_1(x)+\dots+\beta_{a-1}p^{{a-1}}(x+1)^{i_{a-1}}\alpha_{a-1}(x)
\]
where $\beta_k\in \Tm$ and $\alpha_k(x)\in\R$ is invertible. Note
since $f_r$ is monic, $\beta_0\neq 0$. So,
$p^{a-1}f_r=p^{a-1}\beta_0(x+1)^{i_0}\alpha_0(x)$. Since $\beta_0$
and $\alpha_0(x)$ are units,
$d(I)=d(\id{p^{a-1}f_r})=d(\id{p^{a-1}(x+1)^{i_0}})$. In light of
Remark \ref{remark.isometry}, the distance
$d(\id{p^{a-1}(x+1)^{i_0}})$ can be found using Lemma
\ref{lemma.hai2}.
\end{remark}

\section{Cyclic codes in $\frac{GR(p^a,m)[x]}{\id{x^{p^s}-1}}$ for arbitrary prime $p$}
\label{sect.struct2} Let us first consider the case when $p$ is an
odd prime. It is easy to see, arguing as in Proposition 5.1 in
\cite{dinh_2004}, that $\R\cong\frac{GR(p^a,m)[x]}{\id{x^{p^s}-1}}$
by sending $x$ to $-x$. Hence, all the results can in Section
\ref{sect.struct} translate easily into results about cyclic codes.
Let us therefore focus solely on the case when $p=2$ for the
remainder of this section.

Remember that in \cite{dinh_2005}, it was shown that
$\frac{GR(2^a,m)[x]}{\id{x^{2^s}+1}}$ is a chain ring. They also
computed the Hamming distance for most of the codes. Let us now
consider the cyclic case i.e. the code ambient $\T$. It turns out
that when $a>1$, this is not a chain ring but the structure is very
similar to the ring considered in Section \ref{sect.struct}. In this
section assume $a>1$ and as before $s>0$ since this produces the
trivial case. Most of the proofs here are very similar to their
analogs in Section \ref{sect.struct}. We include only the proofs
that need fundamental modification.

\begin{proposition}
\label{prop.nilp2} In $\T$, $(x+1)$ is nilpotent.
\end{proposition}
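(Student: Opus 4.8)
The plan is to mimic the computation in Proposition \ref{prop.nilp} for the odd-prime case, but watching carefully what happens at the prime $2$ where the signs disappear and where the binomial estimate of Lemma \ref{lemma.pd} still applies (note Lemma \ref{lemma.pd} is stated for an arbitrary prime). First I would expand $(x+1)^{2^s}$ in $GR(2^a,m)[x]$ using the binomial theorem:
\[
(x+1)^{2^s}=x^{2^s}+\binom{2^s}{2^s-1}x^{2^s-1}+\dots+\binom{2^s}{1}x+1 .
\]
By Lemma \ref{lemma.pd}, every intermediate binomial coefficient $\binom{2^s}{k}$ with $1\le k\le 2^s-1$ is divisible by $2$, so the whole middle sum is $2\beta(x)$ for some $\beta(x)\in GR(2^a,m)[x]$. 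Passing to the quotient $\T$, where $x^{2^s}=1$, we get $(x+1)^{2^s}=1+1+2\beta(x)=2+2\beta(x)=2\gamma(x)$ for a suitable $\gamma(x)\in\T$; the only genuinely new point compared with the odd case is that the two ``outer'' terms $x^{2^s}$ and $1$ now \emph{add} (rather than cancel as they did when the relation was $x^{p^s}+1=0$) to give the extra $2$, which is still a multiple of the nilpotent $2$, so nothing is lost.

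Having shown $(x+1)^{2^s}\in 2\T$, I would finish exactly as in Proposition \ref{prop.nilp}: since $2$ is nilpotent in $GR(2^a,m)$ (indeed $2^a=0$ there), raising to the $a$th power gives
\[
(x+1)^{2^s a}=\bigl((x+1)^{2^s}\bigr)^a=\bigl(2\gamma(x)\bigr)^a=2^a\gamma(x)^a=0 ,
\]
so $x+1$ is nilpotent in $\T$.

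I do not expect a serious obstacle here; the statement is the $p=2$ analogue of Proposition \ref{prop.nilp} and the argument is essentially identical. The one place that merits a moment's care is the bookkeeping that $x^{2^s}+1$ and $x^{2^s}-1$ now coincide modulo $2$, so that ``$x^{2^s}=1$'' combined with the two unit end-terms produces a factor of $2$ rather than $0$; once that is noted, the divisibility input from Lemma \ref{lemma.pd} and the nilpotency of $2$ do the rest, and the sharper computation of the exact nilpotency index can be deferred to the cyclic analogue of Corollary \ref{cor.nil}.
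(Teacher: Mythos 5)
Your proposal is correct and is essentially the paper's own proof: both expand $(x+1)^{2^s}$ by the binomial theorem, observe that the interior coefficients are even and that the end terms contribute $1+1=2$ once $x^{2^s}-1=0$ is used, and then kill $(x+1)^{2^s}=2\gamma(x)$ by raising to the $a$th power. The only cosmetic difference is that the paper inserts $-1+1$ to exhibit $x^{2^s}-1$ explicitly rather than substituting $x^{2^s}=1$ as you do.
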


\begin{proof}
\[
\begin{array}{lll}
(x+1)^{2^s}     & = & x^{2^s}+\nck{2^s}{2^s-1}x^{2^{s}-1}+\dots+\nck{2^s}{1}x+1-1+1\\
                & = & x^{2^s}-1+2\alpha(x)\\
                & = & 2\alpha(x)\\
\end{array}
\]
where $\alpha(x)\in \T$. Then $(x+1)^{2^sa}=2^a(\alpha(x))^a=0$.
\end{proof}

The following propositions can be obtained from the parallel results
in Section \ref{sect.struct} by replacing $p$ with $2$ and
$x^{p^s}+1$ with $x^{2^s}-1$ and using Proposition \ref{prop.nilp2}
in lieu of Proposition \ref{prop.nilp} when needed.

\begin{proposition}
\label{prop.max2} The ambient ring $\T$ is local with radical
$J\left(\T\right)=\id{2,x+1}$.
\end{proposition}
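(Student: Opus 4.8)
The plan is to reproduce the proof of Proposition~\ref{prop.max} with the substitutions $p\mapsto 2$ and $x^{p^s}+1\mapsto x^{2^s}-1$, using Proposition~\ref{prop.nilp2} in place of Proposition~\ref{prop.nilp}. Concretely, I would let $I$ be the set of non-invertible elements of $\T$ and show $I=\id{2,x+1}$. Once $\id{2,x+1}$ is shown to consist exactly of the non-units, it contains every proper ideal and hence is the unique maximal ideal, so $\T$ is local with $J(\T)=\id{2,x+1}$.

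For the inclusion $\id{2,x+1}\subseteq I$: the ideal $\id{2,x+1}$ is a nil ideal, since it is generated by the two commuting nilpotent elements $2$ (as $2^a=0$ in $GR(2^a,m)$) and $x+1$ (nilpotent by Proposition~\ref{prop.nilp2}); hence every element of $\id{2,x+1}$ is nilpotent and in particular non-invertible. For the reverse inclusion, I would write an arbitrary $f\in\T$, using its representative of degree less than $2^s$, in the $(x+1)$-adic form $f(x)=\sum_{i=0}^{k}a_i(x+1)^i$ with $a_i\in GR(2^a,m)$ and $k<2^s$; this is legitimate because $\{(x+1)^i:0\le i<2^s\}$ is a $GR(2^a,m)$-basis of $\T$. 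If $a_0$ is a unit of $GR(2^a,m)$ then $f=a_0\bigl(1+a_0^{-1}(f-a_0)\bigr)$ with $f-a_0\in\id{x+1}$ nilpotent, so $1+a_0^{-1}(f-a_0)$ is a unit and therefore so is $f$. Conversely, recalling that an element of $GR(2^a,m)$ with $2$-adic expansion $\sum_{j=0}^{a-1}b_j2^j$, $b_j\in\Tm$, is a unit exactly when $b_0\neq 0$, i.e.\ exactly when it is not divisible by $2$, a non-invertible $f$ must have $2\mid a_0$, and then $f\in\id{2,x+1}$. Combining the two inclusions yields $I=\id{2,x+1}$, which completes the proof.

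The step that needs the most care — though it remains routine, being identical to the odd-$p$ situation — is the $(x+1)$-adic normal form for elements of $\T$ together with the unit criterion ``$f$ is a unit iff its $(x+1)$-constant coefficient $a_0$ is a unit in $GR(2^a,m)$''; everything rests on $x+1$ being nilpotent, which is exactly Proposition~\ref{prop.nilp2}, and on the standard structure of the Galois ring $GR(2^a,m)$ recalled in Section~\ref{sect.back}. No genuinely new difficulty arises relative to Proposition~\ref{prop.max}.
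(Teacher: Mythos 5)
Your proposal is correct and is exactly the argument the paper intends: the paper states that Proposition~\ref{prop.max2} is obtained from Proposition~\ref{prop.max} by substituting $2$ for $p$ and $x^{2^s}-1$ for $x^{p^s}+1$, using Proposition~\ref{prop.nilp2} in place of Proposition~\ref{prop.nilp}, which is precisely what you carry out. Your added justifications (the nil-ideal observation and the unit criterion via $f=a_0\bigl(1+a_0^{-1}(f-a_0)\bigr)$) only make the same proof more explicit.
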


\begin{proposition}
\label{prop.soc2} The socle $soc\left(\T\right)$ of $\T$ is the
simple module $\id{2^{a-1}(x+1)^{(2^s-1)}}$.
\end{proposition}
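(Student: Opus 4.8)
The plan is to reproduce the proof of Proposition \ref{prop.soc} almost verbatim, substituting $2$ for $p$, the polynomial $x^{2^s}-1$ for $x^{p^s}+1$, Proposition \ref{prop.nilp2} for Proposition \ref{prop.nilp}, and Proposition \ref{prop.max2} for Proposition \ref{prop.max}. What makes this transfer painless is that over $GR(2,m)$ one has $x^{2^s}+1=x^{2^s}-1$, so the ring $\frac{GR(2,m)[x]}{\id{x^{2^s}-1}}$ is literally the ring appearing in Lemmas \ref{lemma.iso} and \ref{lemma.hai}: it is a chain ring with ideal chain $\id{(x+1)^0}\supsetneq\dots\supsetneq\id{(x+1)^{2^s}}=0$, its unique minimal ideal is $\id{(x+1)^{2^s-1}}$, and --- arguing exactly as for Lemma \ref{lemma.iso} --- it is isomorphic as a $\T$-module to $2^{a-1}\T$.

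First I would use Proposition \ref{prop.max2} to record that $J(\T)=\id{2,x+1}$, so that $soc(\T)$ is exactly the set of elements of $\T$ annihilated by both $2$ and $x+1$: a submodule is semisimple iff it is killed by the radical (as $\T/J(\T)$ is semisimple), so $soc(\T)=\{r\in\T:J(\T)r=0\}$. For the inclusion $\id{2^{a-1}(x+1)^{2^s-1}}\subseteq soc(\T)$ it then suffices to check that $2^{a-1}(x+1)^{2^s-1}$ is killed by the two generators of $J(\T)$: multiplication by $2$ gives $2^{a}(\dots)=0$, and multiplication by $x+1$ gives $2^{a-1}(x+1)^{2^s}=2^{a-1}\cdot2\alpha(x)=0$ by Proposition \ref{prop.nilp2}. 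For the reverse inclusion, take $a(x)\in soc(\T)$; from $2a(x)=0$ one gets $a(x)=2^{a-1}b(x)$ for some $b(x)\in\T$ (as in the proof of Proposition \ref{prop.soc}, or by a degree comparison against the monic polynomial $x^{2^s}-1$). Reading the remaining condition $(x+1)a(x)=0$ through the $\T$-module isomorphism $2^{a-1}\T\cong\frac{GR(2,m)[x]}{\id{x^{2^s}-1}}$ turns it into $(x+1)\bar b(x)=0$ in that chain ring; since $(x+1)(x+1)^{2^s-1}=(x+1)^{2^s}=0$ there while $(x+1)^{j}\ne0$ for $j<2^s$, Lemma \ref{lemma.hai} forces $\bar b(x)\in\id{(x+1)^{2^s-1}}$, whence $a(x)\in\id{2^{a-1}(x+1)^{2^s-1}}$. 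Simplicity then follows because, under the same module isomorphism, $\id{2^{a-1}(x+1)^{2^s-1}}$ corresponds to the unique minimal ideal $\id{(x+1)^{2^s-1}}$ of the chain ring $\frac{GR(2,m)[x]}{\id{x^{2^s}-1}}$.

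I do not expect a serious obstacle here: everything runs parallel to Section \ref{sect.struct}. The only points warranting a moment's attention are confirming that the analogues of Lemmas \ref{lemma.iso} and \ref{lemma.hai} genuinely apply with $x^{2^s}-1$ in place of $x^{2^s}+1$ --- which they do precisely because the two polynomials coincide modulo $2$ --- and, relatedly, the deduction $a(x)\in2^{a-1}\T$ from $2a(x)=0$, which one should either cite from the proof of Proposition \ref{prop.soc} or justify by the indicated degree argument.
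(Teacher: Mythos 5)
Your proof is correct and follows exactly the route the paper intends: the paper gives no separate argument for Proposition \ref{prop.soc2}, instructing the reader to adapt Proposition \ref{prop.soc} by the substitutions you make, and your observation that $x^{2^s}+1=x^{2^s}-1$ over $GR(2,m)$ is precisely why Lemmas \ref{lemma.iso} and \ref{lemma.hai} carry over verbatim. You in fact supply more detail than the original (the radical-annihilator characterization of the socle and the deduction $a(x)\in 2^{a-1}\mathcal{T}$ from $2a(x)=0$), all of it sound.
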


\begin{proposition}
\label{lemma.notchain2} In $\T$
\begin{enumerate}
\item $2\notin{\id{x+1}}$
\item $x+1\notin{\id{2}}$
\item $\T$ is not a chain ring
\item $\id{2,x+1}$ is not a principal ideal
\end{enumerate}
\end{proposition}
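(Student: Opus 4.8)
The plan is to adapt the proof of Proposition~\ref{lemma.notchain} almost verbatim, changing only what the substitution of $x^{2^s}-1$ for $x^{p^s}+1$ and of $2$ for $p$ forces; parts (3) and (4) will then follow formally from (1), (2), Proposition~\ref{prop.max2}, and Lemma~\ref{lemma.chain}, exactly as in Section~\ref{sect.struct}. For (1), I would argue by contradiction: if $2\in\id{x+1}$ in $\T$, then $2=(x+1)f(x)+(x^{2^s}-1)g(x)$ for some $f,g\in GR(2^a,m)[x]$, and evaluating at $x=-1$ --- using that $2^s$ is even since $s\geq1$, so $(-1)^{2^s}-1=0$ --- makes the right-hand side vanish, forcing $2=0$ in $GR(2^a,m)$, which is impossible because $a>1$. (In the odd-prime case one instead used that $p^s$ is odd so that $(-1)^{p^s}+1=0$; either way the substitution kills the relevant term.)

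For (2), suppose for contradiction that $x+1\in\id{2}$, say $x+1=2f(x)+(x^{2^s}-1)g(x)$ in $GR(2^a,m)[x]$. Comparing coefficients of $x^0$ gives $1=2f_0-g_0$, so $g_0$ is a unit; and comparing coefficients of $x^{k2^s}$ for each $k\geq1$ --- where, because $2^s\geq2>1$, the left-hand side $x+1$ contributes nothing --- gives $0=2f_{k2^s}+g_{(k-1)2^s}-g_{k2^s}$, hence $g_{k2^s}\equiv g_{(k-1)2^s}\pmod 2$. By induction $g_{k2^s}$ is a unit, and in particular nonzero, for every $k\geq0$, contradicting the fact that $g$ is a polynomial. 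The sign changes coming from $x^{2^s}-1$ rather than $x^{p^s}+1$ do no harm, since the argument only needs residues modulo $2$.

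Parts (1) and (2) in particular show $2\neq0$ and $x+1\neq0$ in $\T$, so $\id{2}$ and $\id{x+1}$ are nonzero ideals, neither contained in the other; hence $\T$ is not a chain ring, which is (3). For (4), $\T$ is local with maximal ideal $\id{2,x+1}$ by Proposition~\ref{prop.max2}, while by Lemma~\ref{lemma.chain} a local ring whose maximal ideal is principal is a chain ring; since $\T$ is not a chain ring, $\id{2,x+1}$ cannot be principal, so it is genuinely $2$-generated. I do not anticipate a serious obstacle; the only step that demands any care is the coefficient bookkeeping in (2), and even that is routine once one observes that reducing modulo $2$ erases the sign discrepancies introduced by the $-1$.
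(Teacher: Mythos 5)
Your proposal is correct and follows exactly the route the paper intends: the paper gives no separate proof of this proposition, stating only that it follows from Proposition~\ref{lemma.notchain} by substituting $2$ for $p$ and $x^{2^s}-1$ for $x^{p^s}+1$, and your evaluation at $x=-1$ for part (1), coefficient comparison modulo $2$ for part (2), and formal deductions of (3) and (4) via Proposition~\ref{prop.max2} and Lemma~\ref{lemma.chain} are precisely that adaptation carried out correctly.
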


\begin{theorem}
The ambient ring $\T$ is a finite local ring with simple socle but
not a chain ring.
\end{theorem}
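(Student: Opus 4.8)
The plan is to obtain this theorem in exactly the way Theorem~\ref{theo.notc} was obtained in the odd-prime case, namely as an immediate consequence of the three structural propositions already recorded for $\T$. First I would invoke Proposition~\ref{prop.max2}, which exhibits $\id{2,x+1}$ as the unique maximal ideal of $\T$ and hence shows that $\T$ is local; finiteness is automatic, since $GR(2^a,m)$ is finite and $\T$ is a finitely generated free module over it. Next I would cite Proposition~\ref{prop.soc2}, together with the $p=2$ analogs of Lemmas~\ref{lemma.iso} and~\ref{lemma.hai}, to conclude that $soc(\T)=\id{2^{a-1}(x+1)^{(2^s-1)}}$ is a simple $\T$-module. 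Finally, part~(3) of Proposition~\ref{lemma.notchain2} furnishes that $\T$ is not a chain ring. Assembling these three facts gives the statement.

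The only point that genuinely needs care is that each of Propositions~\ref{prop.max2}, \ref{prop.soc2}, and~\ref{lemma.notchain2} was asserted to follow from its Section~\ref{sect.struct} analog "by replacing $p$ with $2$ and $x^{p^s}+1$ with $x^{2^s}-1$," so the proof ultimately rests on verifying that those translated arguments survive the switch to characteristic $2$ and to $x^{2^s}-1$. The delicate one is Proposition~\ref{lemma.notchain2}, the analog of Proposition~\ref{lemma.notchain}. For $2\notin\id{x+1}$ one writes $2=(x+1)f(x)+(x^{2^s}-1)g(x)$ in $GR(2^a,m)[x]$ and evaluates at $x=-1$ to get $2=0$, impossible since $a>1$. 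For $x+1\notin\id{2}$ one repeats the coefficient-comparison argument: writing $x+1=2f(x)+(x^{2^s}-1)g(x)$ and comparing constant terms gives $g_0\equiv 1\pmod 2$, and comparing the coefficients of $x^{k2^s}$ yields a recursion forcing $g_{k2^s}\not\equiv 0\pmod 2$ for all $k\ge 0$, contradicting that $g$ is a polynomial. (Passing from $x^{p^s}+1$ to $x^{2^s}-1$ only flips a sign in this recursion and does not affect the conclusion that infinitely many coefficients are forced nonzero.) The failure of the chain condition, and the non-principality of $\id{2,x+1}$, then follow from Lemma~\ref{lemma.chain} exactly as before.

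The main obstacle here is purely bookkeeping rather than substantive: one must confirm that Proposition~\ref{prop.nilp2} (nilpotency of $x+1$ in $\T$) is the only ingredient, beyond the trivial nilpotency of $2$, needed to drive the $\T$-versions of Propositions~\ref{prop.max}, \ref{prop.soc}, and~\ref{lemma.notchain}; once both $2$ and $x+1$ are known to be nilpotent, the ideal $\id{2,x+1}$ consists of non-units, which is what makes $\T$ local with the stated radical, and the socle and chain-ring arguments carry over verbatim. Granting Propositions~\ref{prop.max2}, \ref{prop.soc2}, and~\ref{lemma.notchain2}, nothing further is required.
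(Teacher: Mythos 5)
Your proposal is correct and follows exactly the paper's own route: the theorem is deduced immediately from Propositions~\ref{prop.max2}, \ref{prop.soc2}, and~\ref{lemma.notchain2}. The extra care you take in checking that the Section~\ref{sect.struct} arguments survive the passage to $p=2$ and $x^{2^s}-1$ is sound but goes beyond what the paper records, which simply cites the three propositions.
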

\begin{proof}
Result of Propositions \ref{prop.max2}, \ref{prop.soc2} and
\ref{lemma.notchain2}.
\end{proof}

The following Lemma is similar to Lemma \ref{lemma.main} with a
subtle difference in the divisor of $a_t(x)$ which is used in the
last line of the proof.

\begin{lemma}
\label{lemma.main2} In $\T$ for $t\geq 0$,
\[
(x+1)^{2^s+t2^{s-1}}=2^{t+1}b_t(x)(x+1)^{2^{s-1}}+a_t(x)
\]
where $b_t(x)$ is invertible and $2^{t+2}(x+1)|a_t(x)$.
\end{lemma}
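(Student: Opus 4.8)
The plan is to mirror the induction used in the proof of Lemma~\ref{lemma.main}, but to track the stronger divisibility $2^{t+2}(x+1)\mid a_t(x)$ instead of merely $p^{t+2}\mid a_t(x)$. The base case $t=0$ comes from expanding $0=x^{2^s}-1=((x+1)-1)^{2^s}-1$ in powers of $(x+1)$ and invoking Lemma~\ref{lemma.pd} with $p=2$ to control the $2$-adic valuation of the binomial coefficients $\nck{2^s}{k}$: the term of lowest $(x+1)$-degree is $\nck{2^s}{1}(x+1)=2^s(x+1)$, and every other term is divisible by $2(x+1)^2$; collecting them, $(x+1)^{2^s}=2b_0(x)(x+1)^{2^{s-1}}+a_0(x)$ where $b_0$ is a unit (its leading contribution is a unit multiple of $(x+1)^{(2-1)2^{s-1}}=(x+1)^{2^{s-1}}$ scaled by a binomial coefficient of $2$-adic valuation exactly $1$) and $a_0(x)$ is divisible by $4(x+1)$ — note in characteristic $2$ the ``$p^{s-1}$'' and ``$(p-1)p^{s-1}$'' exponents coincide so the bookkeeping is simpler than in Lemma~\ref{lemma.main}, and the extra factor $(x+1)$ in $a_0$ is visible because every collected term past the leading one carries at least one more power of $(x+1)$.

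For the inductive step I would assume $(x+1)^{2^s+(t-1)2^{s-1}}=2^{t}b_{t-1}(x)(x+1)^{2^{s-1}}+a_{t-1}(x)$ with $b_{t-1}$ a unit and $2^{t+1}(x+1)\mid a_{t-1}(x)$, then multiply both sides by $(x+1)^{2^{s-1}}$ (so that $2^s+(t-1)2^{s-1}+2^{s-1}=2^s+t2^{s-1}$, matching the target exponent). This yields
\[
(x+1)^{2^s+t2^{s-1}}=2^{t}b_{t-1}(x)(x+1)^{2^s}+a_{t-1}(x)(x+1)^{2^{s-1}}.
\]
Substituting the base-case identity $(x+1)^{2^s}=2b_0(x)(x+1)^{2^{s-1}}+a_0(x)$ into the first term gives $2^{t+1}b_{t-1}(x)b_0(x)(x+1)^{2^{s-1}}+2^{t}b_{t-1}(x)a_0(x)+a_{t-1}(x)(x+1)^{2^{s-1}}$. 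The first summand is the desired leading term with unit coefficient $b_{t-1}b_0$; the remaining two summands must be shown to be divisible by $2^{t+2}(x+1)$. For the middle summand this holds because $2^{t}\cdot 2^2(x+1)=2^{t+2}(x+1)$ divides $2^t a_0(x)$ since $4(x+1)\mid a_0(x)$. For the last summand I would absorb a unit multiple of $a_{t-1}(x)(x+1)^{2^{s-1}}/(2^{t+1}(x+1))$ — which is a legitimate ring element by the induction hypothesis on $a_{t-1}$ — into the correction term $b_t(x):=b_{t-1}(x)b_0(x)+2^{-(t+1)}b_{t-1}(x)^{-1}a_{t-1}(x)(x+1)^{-1}\cdot(\text{unit})$, exactly as in Lemma~\ref{lemma.main} but keeping the spare $(x+1)$ factored out, so that what is genuinely left over in $a_t(x)$ still has $2^{t+1}\cdot 2(x+1)=2^{t+2}(x+1)$ as a divisor.

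The main obstacle is the same algebraic sleight used in Lemma~\ref{lemma.main}: to rewrite $a_{t-1}(x)(x+1)^{2^{s-1}}$ as $2^{t+1}\cdot(\text{something})\cdot(x+1)^{2^{s-1}}$ and then recognize that this can be folded into the coefficient of $(x+1)^{2^{s-1}}$, which requires being careful that (i) division by $2^{t+1}(x+1)$ is well-defined on $a_{t-1}$ and stays inside $\T$, and (ii) after folding, the piece that is \emph{not} a multiple of $(x+1)^{2^{s-1}}$ — coming only from $2^{t}b_{t-1}(x)a_0(x)$ — still carries the full $2^{t+2}(x+1)$. The whole point of the ``subtle difference'' the authors flag is that in the characteristic-$2$ case $a_0(x)$ gains the extra factor $(x+1)$ (because there is no separate $(p-1)p^{s-1}$ versus $p^{s-1}$ gap to eat it), and this extra factor is exactly what propagates through the induction to give $2^{t+2}(x+1)\mid a_t(x)$; so the key check is confirming the base case really does have $4(x+1)\mid a_0(x)$ and that no step of the induction destroys the $(x+1)$ divisor — everything else is the routine bookkeeping already rehearsed in Lemma~\ref{lemma.main}.
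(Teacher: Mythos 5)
Your proposal is correct and follows essentially the same route as the paper's own proof: the same induction on $t$, the same base case obtained by expanding $0=((x+1)-1)^{2^s}-1$ and applying Lemma~\ref{lemma.pd} to isolate the $\nck{2^s}{2^{s-1}}(x+1)^{2^{s-1}}$ term, and the same inductive step of multiplying by $(x+1)^{2^{s-1}}$, substituting the base identity, folding $a_{t-1}(x)(x+1)^{2^{s-1}}$ into the unit coefficient, and leaving $a_t=2^{t}b_{t-1}a_0$ with the required divisor $2^{t+2}(x+1)$. You also correctly pinpoint the role of the extra factor $(x+1)$ in $a_0$ (absent in the odd-$p$ case, where $(x+1)^{(p-2)p^{s-1}}$ supplies the needed nilpotency instead), which is exactly the ``subtle difference'' the paper flags.
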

\begin{proof} We proceed by induction on $t$. For $t=0$,
\[
\begin{array}{lcl}
0=x^{2^s}-1&=&((x+1)-1)^{2^s}-1\\
&=&(x+1)^{2^s}-\nck{2^s}{{2^s}-1}(x+1)^{{2^s}-1}+\nck{{2^s}}{{2^s}-2}(x+1)^{{2^s}-2}-\dots-\nck{{2^s}}{1}(x+1)\\
\end{array}
\]
By Lemma \ref{lemma.pd}
\[
\begin{array}{lcl}
(x+1)^{2^s}&=&\nck{2^s}{{2^s}-1}(x+1)^{{2^s}-1}-\nck{{2^s}}{{2^s}-2}(x+1)^{{2^s}-2}+\dots+\nck{{2^s}}{1}(x+1)\\
&=&\nck{2^s}{2^{s-1}}(x+1)^{2^{s-1}}+a_0(x)\\
&=&2b_0(x)(x+1)^{2^{s-1}}+a_0(x)\\
\end{array}
\]
for some $a_0(x)$ s.t. $2^2(x+1)|a_0(x)$ and
$b_0(x)=\frac{\nck{2^s}{2^{s-1}}}{2}$ which is invertible.

Now assume the result holds for $t-1$. So there exists some
$a_{t-1}(x)$ s.t. $2^{t+1}(x+1)|a_{t-1}(x)$ and $b_{t-1}(x)$
invertible where
$(x+1)^{2^s+(t-1)2^{s-1}}=2^tb_{t-1}(x)(x+1)^{2^{s-1}}+a_{t-1}(x)$.
So
\[
\begin{array}{lcl}
(x+1)^{2^s+t2^{s-1}}&=&(x+1)^{2^s+(t-1)2^{s-1}}(x+1)^{2^{s-1}}\\
&=&\left[2^tb_{t-1}(x)(x+1)^{2^{s-1}}+a_{t-1}(x)\right](x+1)^{2^{s-1}}\\
&=&2^tb_{t-1}(x)(x+1)^{2^s}+a_{t-1}(x)(x+1)^{2^{s-1}}\\
&=&2^tb_{t-1}(x)\left[2b_0(x)(x+1)^{2^{s-1}}+a_0(x)\right]+a_{t-1}(x)(x+1)^{2^{s-1}}\\
&=&2^{t+1}b_{t-1}(x)b_0(x)(x+1)^{2^{s-1}}+2^tb_{t-1}(x)a_0(x)+a_{t-1}(x)(x+1)^{2^{s-1}}\\
&=&2^{t+1}b_{t-1}(x)\left[b_0(x)+\frac{a_{t-1}(x)}{2^{t+1}}\right](x+1)^{2^{s-1}}+2^tb_{t-1}(x)a_0(x)\\
&=&2^{t+1}b_{t}(x)(x+1)^{2^{s-1}}+a_t(x)\\
\end{array}
\]
where $2^{t+2}(x+1)|a_{t}(x)$ and $b_{t}(x)$ invertible.
\end{proof}

\begin{corollary}
\label{cor.nil2} In $\T$, the nilpotency of $x+1$ is $(a+1)2^{s-1}$.
\end{corollary}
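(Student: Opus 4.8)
The plan is to imitate the proof of Corollary~\ref{cor.nil} essentially verbatim, feeding in Lemma~\ref{lemma.main2} in place of Lemma~\ref{lemma.main}. First I would set $t=a-2$ in Lemma~\ref{lemma.main2}; this is legitimate because $a>1$ is a standing hypothesis of the section, so $a-2\geq 0$. It yields
\[
(x+1)^{2^s+(a-2)2^{s-1}}=2^{a-1}b_{a-2}(x)(x+1)^{2^{s-1}}+a_{a-2}(x)
\]
with $b_{a-2}(x)$ invertible and $2^a(x+1)\mid a_{a-2}(x)$. Since $2^a=0$ in $\T$, the remainder term $a_{a-2}(x)$ vanishes, and the exponent on the left collapses to $2^s+(a-2)2^{s-1}=a\,2^{s-1}$. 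Hence $(x+1)^{a\,2^{s-1}}=2^{a-1}b(x)(x+1)^{2^{s-1}}$ for some invertible $b(x)$.

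For the lower bound on the nilpotency I would multiply this identity by $(x+1)^{2^{s-1}-1}$, obtaining
\[
(x+1)^{(a+1)2^{s-1}-1}=2^{a-1}b(x)(x+1)^{2^s-1}.
\]
The right-hand side is nonzero: by Proposition~\ref{prop.soc2} the socle of $\T$ is the \emph{simple}, hence nonzero, module $\id{2^{a-1}(x+1)^{2^s-1}}$, so $2^{a-1}(x+1)^{2^s-1}\neq 0$, and multiplying by the unit $b(x)$ keeps it nonzero. Therefore $(x+1)^{(a+1)2^{s-1}-1}\neq 0$, so the nilpotency of $x+1$ is at least $(a+1)2^{s-1}$.

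For the upper bound I would multiply the previous identity once more by $(x+1)$ and invoke Proposition~\ref{prop.nilp2}, which gives $(x+1)^{2^s}=2\alpha(x)$ for some $\alpha(x)\in\T$; then
\[
(x+1)^{(a+1)2^{s-1}}=2^{a-1}b(x)(x+1)^{2^s}=2^{a-1}b(x)\cdot 2\alpha(x)=2^a b(x)\alpha(x)=0.
\]
Combining the two bounds shows that the nilpotency of $x+1$ in $\T$ is exactly $(a+1)2^{s-1}$. I do not expect any real obstacle: the argument is routine once Lemma~\ref{lemma.main2} is available, and the only points needing a moment's attention are the exponent arithmetic $2^s+(a-2)2^{s-1}=a\,2^{s-1}$ (and $a\,2^{s-1}+2^{s-1}-1=(a+1)2^{s-1}-1$) and the fact that $a\geq 2$ is what makes the specialization $t=a-2$ meaningful.
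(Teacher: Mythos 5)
Your proposal is correct and follows the paper's proof essentially verbatim: specialize Lemma \ref{lemma.main2} at $t=a-2$, kill the remainder term since $2^a=0$, multiply by $(x+1)^{2^{s-1}-1}$ to get the nonzero element $2^{a-1}b(x)(x+1)^{2^s-1}$, and multiply once more by $(x+1)$ to reach zero. Your extra justification that $2^{a-1}(x+1)^{2^s-1}\neq 0$ via Proposition \ref{prop.soc2} is a welcome detail the paper leaves implicit.
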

\begin{proof}
By Lemma \ref{lemma.main2},
\[
(x+1)^{2^s+(a-2)2^{s-1}}=2^{a-1}b(x)(x+1)^{2^{s-1}}+a(x)
\]
for some $b(x)$ is invertible and $a(x)$ s.t. $2^{a}|a(x)$. So,
$a(x)=0$ and
\[
(x+1)^{2^s+(a-2)2^{s-1}}=2^{a-1}b(x)(x+1)^{2^{s-1}}.
\]
So,
\[
(x+1)^{2^s+(a-2)2^{s-1}}(x+1)^{2^{s-1}-1}=2^{a-1}b(x)(x+1)^{2^s-1}
\]
meaning
\[
(x+1)^{2^s+(a-1)2^{s-1}-1}=2^{a-1}b(x)(x+1)^{2^s-1}\neq 0.
\]
Finally,
\[
(x+1)^{2^s+(a-1)2^{s-1}}=2^{a-1}b(x)(x+1)^{2^s}=0.
\]
Hence the nilpotency of $x+1$ is $2^s+(a-1)2^{s-1}=(a+1)2^{s-1}$.
\end{proof}

The two proofs for Lemma \ref{lemma.agen} can be adapted to this
setting with the same substitutions as before.

\begin{lemma}
\label{lemma.agen2} In $\T$, any ideal is at most $a$-generated.
\end{lemma}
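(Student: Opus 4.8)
The plan is to mimic the two proofs already given for Lemma~\ref{lemma.agen}, since the statement of Lemma~\ref{lemma.agen2} is the exact analog for the ambient $\T=\frac{GR(2^a,m)[x]}{\id{x^{2^s}-1}}$. Both proofs rest only on facts that carry over verbatim: $GR(2^a,m)$ is a finite local commutative ring (in fact a chain ring), the division algorithm of Lemma~\ref{lemma.div} applies in $GR(2^a,m)[x]$, every polynomial factors as $2^k$ times a regular polynomial, and every element of $\T$ has a ``canonical form'' expansion in powers of $x+1$ with Teichm\"uller coefficients. The substitutions are: replace $p$ by $2$, replace the modulus $x^{p^s}+1$ by $x^{2^s}-1$, and invoke Proposition~\ref{prop.nilp2} wherever Proposition~\ref{prop.nilp} was used (so that $x+1$ is nilpotent in $\T$, which is what makes the canonical-form argument legitimate).

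Concretely, for the first proof I would start with an ideal $I=\id{f_0,\dots,f_n}\subset GR(2^a,m)[x]$ with each $\deg f_i < 2^s$, write $f_i = 2^{k_i}g_i$ with $g_i$ regular, and then repeatedly apply Lemma~\ref{lemma.div}: whenever two generators share the same $2$-adic level $k$, divide one regular part by the other to replace a generator by a remainder of strictly smaller degree (or zero), and iterate, dropping the level when the remainder becomes divisible by a higher power of $2$. Since there are only $a$ possible $2$-adic levels $0,1,\dots,a-1$ in $GR(2^a,m)$, this terminates with a generating set $\id{g_0,2g_1,\dots,2^{a-1}g_{a-1}}$, one (possibly zero) generator per level. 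For the second proof I would reproduce the canonical-form reduction: write each generator as $\sum_{j=0}^{a-1}\beta_{jk}2^k(x+1)^{i_{jk}}\alpha_{jk}(x)$ with $\beta_{jk}\in\Tm$ and $\alpha_{jk}$ invertible, then run the $a-1$-step elimination — at step $k$, pick the generator whose level-$k$ term has smallest $(x+1)$-exponent, use it to clear the level-$k$ term of all later generators, leaving only $f_0,\dots,f_{a-1}$ nonzero and all $f_j$ with $j\ge a$ equal to $0$.

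There is essentially no genuine obstacle here; the only point deserving a moment's care is that the canonical form used in Proof~\#2 was derived in Section~\ref{sect.struct} using the nilpotency of $x+1$ in $\R$, so I must note that Proposition~\ref{prop.nilp2} supplies the corresponding nilpotency in $\T$ and hence the same canonical form (with $2$ in place of $p$) is available. Given that, I would simply state:

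\begin{proof}
Both proofs of Lemma~\ref{lemma.agen} apply here after replacing $p$ by $2$ and $x^{p^s}+1$ by $x^{2^s}-1$, using Proposition~\ref{prop.nilp2} in place of Proposition~\ref{prop.nilp}. In particular, since $GR(2^a,m)$ is a finite local commutative ring, Lemma~\ref{lemma.div} is available and every polynomial over it is $2^k$ times a regular polynomial; running the division process of Proof~\#1 on a generating set $\id{f_0,\dots,f_n}$ of an ideal of $GR(2^a,m)[x]$ with each $\deg f_i<2^s$ yields a generating set of the form $\id{g_0,2g_1,\dots,2^{a-1}g_{a-1}}$, with each $g_i$ either $0$ or regular. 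Alternatively, since $x+1$ is nilpotent in $\T$ by Proposition~\ref{prop.nilp2}, each element of $\T$ has the canonical form described in Section~\ref{sect.struct}, and the $a-1$-step elimination of Proof~\#2 reduces any generating set $\id{g_0,\dots,g_n}$ to $\id{f_0,\dots,f_{a-1}}$. Either way, $I$ is generated by $a$ or fewer elements.
\end{proof}
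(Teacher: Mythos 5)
Your proposal is correct and matches the paper exactly: the paper itself proves Lemma~\ref{lemma.agen2} by simply remarking that both proofs of Lemma~\ref{lemma.agen} adapt to $\T$ with the substitutions $p\mapsto 2$ and $x^{p^s}+1\mapsto x^{2^s}-1$. Your version is, if anything, more careful in spelling out why the division-algorithm and canonical-form arguments carry over.
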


As was the case with most of the structure results, the Hamming
distance results in Section \ref{sect.struct} can easily be adapted
to this setting. The main results needed are the Hamming Distances
for the codes in $\frac{GR(2,m)[x]}{\id{x^{2^s}-1}}$. These
distances again were obtained in \cite{dinh_2008}.

\begin{lemma}[Corollary 4.12, \cite{dinh_2008}]
\label{lemma.hai3} In $\frac{GR(2,m)[x]}{\id{x^{2^s}-1}}$, for
$0\leq i\leq p^s$
\begin{equation*}
d(\id{(x+1)^i})=\left\{
\begin{array}{ll}
1&\textrm{ if } i=0,\\
2&\textrm{ if } 1\leq i\leq p^{s-1} \textrm{ where } 0\leq \beta\leq p-2,\\
2^{k+1}&\textrm{ if } 2^s-2^{s-k}+1\leq i\leq p^s-p^{s-k}+2^{s-k-1},\\
&\textrm{ where } 1\leq k\leq s-1, \\
0&\textrm{ if } i=2^s,\\
\end{array}
\right.
\end{equation*}
\end{lemma}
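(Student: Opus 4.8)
The plan is to exploit the fact that the alphabet here is $GR(2,m)=F_{2^m}$, so that in characteristic $2$ we have $x^{2^s}-1=(x+1)^{2^s}$ and therefore $\frac{GR(2,m)[x]}{\id{x^{2^s}-1}}=\frac{GR(2,m)[x]}{\id{x^{2^s}+1}}$ \emph{as rings}. In particular this ambient is exactly the chain ring already treated in Lemma \ref{lemma.hai2}, so it suffices to specialize that lemma to $p=2$: the constraint $0\le\beta\le p-2$ then forces $\beta=0$, collapsing the second branch to $d=2$ on $1\le i\le 2^{s-1}$, and $1\le t\le p-1$ forces $t=1$, collapsing the third branch to $d=2^{k+1}$ on the stated range. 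That is precisely the asserted formula, which is presumably why one simply quotes \cite{dinh_2008} here.

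Should a self-contained argument be wanted, I would set $S=F_{2^m}[x]/\id{(x+1)^{2^s}}$, a chain ring whose ideals are $C_i=\id{(x+1)^i}$ for $0\le i\le 2^s$, and proceed in three steps. First, the trivial cases: $C_0=S$ contains the weight-one element $1$, so $d(C_0)=1$, while $C_{2^s}=0$. Second, the upper bounds: a weight-one element is a scalar multiple of a monomial, hence a unit of $S$ (since $x=(x+1)+1$ is a unit), so it lies in no proper ideal and $d(C_i)\ge 2$ for $i\ge 1$; as $(x+1)^{2^{s-1}}=x^{2^{s-1}}+1$ has weight $2$ and lies in $C_i$ for $1\le i\le 2^{s-1}$, equality holds on that range. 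For the third branch I would use $(x+1)^{2^s-2^{s-k}+2^{s-k-1}}$: the binary expansion of its exponent is a block of $k+1$ consecutive ones, so — the number of nonzero terms of $(x+1)^N$ over $F_{2^m}$ being $2$ to the number of ones in the binary expansion of $N$ (Lucas' theorem; compare Lemma \ref{lemma.pd}) — it has exactly $2^{k+1}$ nonzero terms; since its degree is below $2^s$ and it is divisible by $(x+1)^i$ for every $i$ in the stated range, $d(C_i)\le 2^{k+1}$ there.

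The remaining step, the matching lower bound, is where I expect the real work to be, and I would carry it out by induction on $s$ using the splitting $c(x)=e(x^2)+x\,o(x^2)$ into even- and odd-exponent parts, which occupy disjoint monomials so that $w(c)=w(e)+w(o)$. The crux is a short characteristic-$2$ computation with Hasse derivatives at $x=1$ showing that $(x+1)^{2i'}\mid c$ forces both $(x+1)^{i'}\mid e$ and $(x+1)^{i'}\mid o$, and that $(x+1)^{2i'+1}\mid c$ forces in addition $e^{[i']}(1)=o^{[i']}(1)$; after passing through the Frobenius automorphism in the cases where one of $e,o$ vanishes, one bounds $w(e)$ and $w(o)$ from below by the length-$2^{s-1}$ distances $d(C^{(s-1)}_{i'})$ or $d(C^{(s-1)}_{i'+1})$ supplied by the inductive hypothesis. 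The delicate point is the bookkeeping at the boundaries between the three branches of the formula: one must verify that $d(C^{(s-1)}_{i'})$ (for $i=2i'$) and $\min\bigl(d(C^{(s-1)}_{i'+1}),\,2\,d(C^{(s-1)}_{i'})\bigr)$ (for $i=2i'+1$) reproduce the claimed length-$2^s$ values, the base case $s=1$ being immediate. Combined with the upper bounds of Step 2, this pins down $d(C_i)$ exactly.
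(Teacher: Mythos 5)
The paper offers no proof of this lemma at all --- it is quoted (modulo some typographical slips: the stray $p$'s that should be $2$'s and the vestigial condition on $\beta$) from Corollary 4.12 of \cite{dinh_2008} --- and your first paragraph is exactly the right justification: in characteristic $2$ the cyclic and negacyclic ambients coincide because $x^{2^s}-1=(x+1)^{2^s}$, so the statement is literally Lemma \ref{lemma.hai2} with $p=2$, where $\beta=0$ and $t=1$ are forced and the branches collapse precisely to the displayed formula. Your self-contained sketch (units rule out weight one, $(x+1)^{2^{s-1}}=x^{2^{s-1}}+1$ and Lucas' theorem for the upper bounds, even/odd splitting with induction on $s$ for the lower bounds) is the standard route from the repeated-root literature and is sound in outline, but it is not needed to match the paper, which simply cites the result.
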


\begin{remark}
Given Lemma \ref{lemma.hai3}, the same method as in Section
\ref{sect.struct} of Remark \ref{remark.dist} can be applied here to
compute the Hamming distances for codes in $\T$.
\end{remark}

\bibliography{../SteveSzaboRefs}

\begin{thebibliography}{10}

\bibitem{abualrub_2003}
T.~Abualrub and R.~Oehmke.
\newblock Cyclic codes of length {$2\sp e$} over {$Z\sb 4$}.
\newblock {\em Discrete Appl. Math.}, 128\penalty0 (1):\penalty0 3--9, 2003.
\newblock International Workshop on Coding and Cryptography (WCC 2001) (Paris).

\bibitem{abualrub_2003_2}
T.~Abualrub and R.~Oehmke.
\newblock On the generators of {$\Bbb Z\sb 4$} cyclic codes of length {$2\sp
  e$}.
\newblock {\em IEEE Trans. Inform. Theory}, 49\penalty0 (9):\penalty0
  2126--2133, 2003.

\bibitem{bini_2002}
G.~Bini and F.~Flamini.
\newblock {\em Finite commutative rings and their applications}.
\newblock The Kluwer International Series in Engineering and Computer Science,
  680. Kluwer Academic Publishers, Boston, MA, 2002.
\newblock ISBN 1-4020-7039-X.
\newblock x+176 pp.
\newblock With a foreword by Dieter Jungnickel.

\bibitem{calderbank_1995}
A.~R. Calderbank and N.~J.~A. Sloane.
\newblock Modular and {$p$}-adic cyclic codes.
\newblock {\em Des. Codes Cryptogr.}, 6\penalty0 (1):\penalty0 21--35, 1995.

\bibitem{castagnoli_1991}
G.~Castagnoli, J.~L. Massey, P.~A. Schoeller, and N.~von Seemann.
\newblock On repeated-root cyclic codes.
\newblock {\em IEEE Trans. Inform. Theory}, 37\penalty0 (2):\penalty0 337--342,
  1991.

\bibitem{dinh_2005}
H.~Q. Dinh.
\newblock Negacyclic codes of length {$2\sp s$} over {G}alois rings.
\newblock {\em IEEE Trans. Inform. Theory}, 51\penalty0 (12):\penalty0
  4252--4262, 2005.

\bibitem{dinh_2008}
H.~Q. Dinh.
\newblock On the linear ordering of some classes of negacyclic and cyclic codes
  and their distance distributions.
\newblock {\em Finite Fields Appl.}, 14\penalty0 (1):\penalty0 22--40, 2008.

\bibitem{dinh_2004}
H.~Q. Dinh and S.~R. L{\'o}pez-Permouth.
\newblock Cyclic and negacyclic codes over finite chain rings.
\newblock {\em IEEE Trans. Inform. Theory}, 50\penalty0 (8):\penalty0
  1728--1744, 2004.

\bibitem{dougherty_2007}
S.~T. Dougherty and Y.~H. Park.
\newblock On modular cyclic codes.
\newblock {\em Finite Fields Appl.}, 13\penalty0 (1):\penalty0 31--57, 2007.

\bibitem{kanwar_1997}
P.~Kanwar and S.~R. L{\'o}pez-Permouth.
\newblock Cyclic codes over the integers modulo {$p\sp m$}.
\newblock {\em Finite Fields Appl.}, 3\penalty0 (4):\penalty0 334--352, 1997.

\bibitem{kiah_2008}
H.~M. Kiah, K.~H. Leung, and S.~Ling.
\newblock Cyclic codes over {${\rm GR}(p\sp 2,m)$} of length {$p\sp k$}.
\newblock {\em Finite Fields Appl.}, 14\penalty0 (3):\penalty0 834--846, 2008.

\bibitem{mcdonald_1974}
B.~R. McDonald.
\newblock {\em Finite rings with identity}.
\newblock Marcel Dekker Inc., New York, 1974.
\newblock ix+429 pp.
\newblock Pure and Applied Mathematics, Vol. 28.

\bibitem{norton_2000}
G.~H. Norton and A.~S{\u{a}}l{\u{a}}gean.
\newblock On the {H}amming distance of linear codes over a finite chain ring.
\newblock {\em IEEE Trans. Inform. Theory}, 46\penalty0 (3):\penalty0
  1060--1067, 2000.

\bibitem{norton_2001}
G.~H. Norton and A.~S{\u{a}}l{\u{a}}gean.
\newblock Strong {G}r\"obner bases and cyclic codes over a finite-chain ring.
\newblock In {\em International {W}orkshop on {C}oding and {C}ryptography
  ({P}aris, 2001)}, volume~6 of {\em Electron. Notes Discrete Math.}, page 11
  pp. (electronic). Elsevier, Amsterdam, 2001.

\bibitem{norton_2003}
G.~H. Norton and A.~Salagean.
\newblock Cyclic codes and minimal strong {G}r\"obner bases over a principal
  ideal ring.
\newblock {\em Finite Fields Appl.}, 9\penalty0 (2):\penalty0 237--249, 2003.

\bibitem{ozbudak}
H.~\"Ozadam and F.~\"Ozbudak.
\newblock A note on negacyclic and cyclic codes of length $p^s$ over a finite
  field of characteristic $p$.
\newblock {\em pre-print}.

\bibitem{pless_1996}
V.~S. Pless and Z.~Qian.
\newblock Cyclic codes and quadratic residue codes over {$Z\sb 4$}.
\newblock {\em IEEE Trans. Inform. Theory}, 42\penalty0 (5):\penalty0
  1594--1600, 1996.

\bibitem{salagean_2006}
A.~S{\u{a}}l{\u{a}}gean.
\newblock Repeated-root cyclic and negacyclic codes over a finite chain ring.
\newblock {\em Discrete Appl. Math.}, 154\penalty0 (2):\penalty0 413--419,
  2006.

\bibitem{vanlint_1991}
J.~H. van Lint.
\newblock Repeated-root cyclic codes.
\newblock {\em IEEE Trans. Inform. Theory}, 37\penalty0 (2):\penalty0 343--345,
  1991.

\bibitem{wan_1999}
Z.-X. Wan.
\newblock Cyclic codes over {G}alois rings.
\newblock {\em Algebra Colloq.}, 6\penalty0 (3):\penalty0 291--304, 1999.

\bibitem{wolfmann_1999}
J.~Wolfmann.
\newblock Negacyclic and cyclic codes over {$Z\sb 4$}.
\newblock {\em IEEE Trans. Inform. Theory}, 45\penalty0 (7):\penalty0
  2527--2532, 1999.

\end{thebibliography}
\bibliographystyle{is-abbrv}

\end{document}